\renewcommand{\theequation}{\thesection\arabic{equation}}
\theoremstyle{definition}
\newtheorem{lem}{Lemma}
\newtheorem{thm}{Theorem}
\newtheorem{cor}{Corollary}
 \DeclareMathOperator{\var}{var}
\DeclareMathOperator{\tr}{tr} 
\DeclareMathOperator{\IMSE}{IMSE}
\newcommand{\E}{\mathbb{E}}
\DeclareMathOperator{\se}{EA}
\DeclareMathOperator*{\argmin}{argmin}
\DeclareMathOperator*{\argmax}{argmax}
\DeclareMathOperator{\eff}{eff}
\DeclareMathOperator{\Eff}{Eff}
\DeclareMathOperator{\lse}{LEA}
\DeclareMathOperator{\mr}{Mm}
\DeclareMathOperator{\opt}{opt}
\DeclareMathOperator{\effcom}{eff-com}
\DeclareMathOperator{\phicom}{\Phi_p-com}
\DeclareMathOperator{\median}{median}
\newcommand{\reals}{\mathbb{R}}
\newcommand{\vx}{\boldsymbol{x}}
\newcommand{\vz}{\boldsymbol{z}}
\newcommand{\vf}{\boldsymbol{f}}
\newcommand{\vg}{\boldsymbol{g}}
\newcommand{\tphi}{\tilde{\Phi}}
\newcommand{\vbeta}{\boldsymbol{\beta}}
\newcommand{\mA}{{\mathsf A}}
\newcommand{\mB}{{\mathsf B}}
\newcommand{\mI}{{\mathsf I}}
\newcommand{\mM}{{\mathsf M}}
\newcommand{\mF}{{\mathsf F}}
\newcommand{\mG}{{\mathsf G}}
\newcommand{\mS}{{\mathsf S}}
\newcommand{\robdes}{{\xi^{\mr}_{\modelspace}}}
\newcommand{\optdesM}{{\xi^{\opt}_M}}
\newcommand{\optdesMj}{{\xi^{\opt}_{M_j}}}
\newcommand{\phicomdes}{{\xi^{\phicom}_{\modelspace}}}
\newcommand{\effcomdes}{{\xi^{\effcom}_{\modelspace}}}
\newcommand{\dif}{{\rm d}}
\newcommand{\vlambda}{\boldsymbol{\lambda}}
\newcommand{\modelspace}{\mathcal{M}'}
\begin{document}

%%%%%%%%%%%%%%%%%%%%%%%%%%%%%%%%%%%%%%%%%%%%%%%%%%%%%%%%%%%%%%%%%%%%%%%%%%%%%%%%%%%%%%%%%%%%%%%%%%%%%%%%%%%%%%%%%%%%%%%%%%%%
%%%%%%%%%%%%%%%%%%%%%%%%%%%%%%%%%%%%%%%%%%%%%%%%%%%%%%%%%%%%%%%%%%%%%%%%%%%%%%%%%%%%%%%%%%%%%%%%%%%%%%%%%%%%%%%%%%%%%%%%%%%%

\renewcommand{\baselinestretch}{2}

\markright{ \hbox{\footnotesize\rm Statistica Sinica
%{\footnotesize\bf 24} (201?), 000-000
}\hfill\\[-13pt]
\hbox{\footnotesize\rm
%\href{http://dx.doi.org/10.5705/ss.20??.???}{doi:http://dx.doi.org/10.5705/ss.20??.???}
}\hfill }

\markboth{\hfill{\footnotesize\rm YIOU LI AND LULU KANG AND XINWEI DENG} \hfill}
{\hfill {\footnotesize\rm A Maximin $\Phi_{p}$-Efficient Design for Multivariate GLM} \hfill}

\renewcommand{\thefootnote}{}
$\ $\par

%%%%%%%%%%%%%%%%%%%%%%%%%%%%%%%%%%%%%%%%%%%%%%%%%%%%%%%%%%%%%%%%%%%%%%%%%%%%%%%%%%%%%%%%%%%%%%%%%%%%%%%%%%%%%%%%%%%%%%%%%%%%

\fontsize{12}{14pt plus.8pt minus .6pt}\selectfont \vspace{0.8pc}
%\centerline{\large \bf A Maximin $\Phi_{p}$-Efficient Design for Multivariate Generalized Linear Models}
\begin{center}
{\large \bf A Maximin $\Phi_{p}$-Efficient Design for Multivariate Generalized Linear Models}
\end{center}
%\vspace{.4cm}
\centerline{Yiou Li, Lulu Kang and Xinwei Deng} \vspace{.4cm} \centerline{\it
DePaul University, Illinois Institute of Technology, and Virginia Tech}
%\vspace{.4cm} \centerline{Lulu Kang} \vspace{.4cm} \centerline{\it
%Department of Applied Mathematics, Illinois Institute of Technology, Chicago, IL 60616}
%\vspace{.4cm}\centerline{Xinwei Deng} \vspace{.4cm} \centerline{\it
%Department of Statistics, Virginia Tech, Blacksburg, VA 24061}
\vspace{.55cm} \fontsize{9}{11.5pt plus.8pt minus .6pt}\selectfont
%%%%%%%%%%%%%%%%%%%%%%%%%%%%%%%%%%%%%%%%%%%%%%%%%%%%%%%%%%%%%%%%%%%%%%%%%%%%%%%%%%%%%%%%%%%%%%%%%%%%%%%%%%%%%%%%%%%%%%%%%%%%
\begin{quotation}
\noindent {\it Abstract:}
Experimental designs for a generalized linear model (GLM) often depend on the specification of the model, including the link function, the predictors, and unknown parameters, such as the regression coefficients.
To deal with uncertainties of these model specifications, it is important to construct optimal designs with high efficiency under such uncertainties.
Existing methods such as Bayesian experimental designs often use prior distributions of model specifications to incorporate model uncertainties into the design criterion.
Alternatively, one can obtain the design by optimizing the worst-case design efficiency with respect to uncertainties of model specifications.
In this work, we propose a new Maximin $\Phi_p$-Efficient (or Mm-$\Phi_p$ for short) design which aims at maximizing the minimum $\Phi_p$-efficiency under model uncertainties.
Based on the theoretical properties of the proposed criterion, we develop an efficient algorithm with sound convergence properties to construct the Mm-$\Phi_p$ design.
The performance of the proposed Mm-$\Phi_p$ design is assessed through several numerical examples.
\vspace{5pt}

\noindent {\it Key words and phrases:} $\Phi_p$-Criterion, Design Efficiency, Efficient Algorithm, Model Uncertainty, Optimal Design.
\par
\end{quotation}\par

\def\thefigure{\arabic{figure}}
\def\thetable{\arabic{table}}

\renewcommand{\theequation}{\thesection.\arabic{equation}}

\fontsize{12}{14pt plus.8pt minus .6pt}\selectfont

\setcounter{section}{0} %***
\setcounter{equation}{0} %-1

\lhead[\footnotesize\thepage\fancyplain{}\leftmark]{}\rhead[]{\fancyplain{}\rightmark\footnotesize\thepage}%Put this line in Page 2

\section{Introduction}\label{sec:intro}

Optimal design for generalized linear models (GLMs) \citep{sitter1995d-optimal,khuri2006design,silvey2013optimal, fedorov2013optimal} is an important topic in the design of experiments area.
In recent years, there have been new developments on both theoretical and algorithmic fronts, such as \cite{woods2011continuous,yang2011optimal, burghaus2014optimal, 14wu, waite2015designs, wong2019optimal} among many others.
A key challenge of optimal design for GLMs is that the design criterion often depends on the regression model assumption, including the specification of the link function, the linear predictor and the values of the unknown regression coefficients.
Many existing works focus on local optimal designs given a certain model specification, such as in \cite{09yang, li2009some, 12yang, 14wu,li2018efficient}.
Contrary to the local optimal design, one type of global optimal design takes the parameter uncertainty into consideration under two directions.
One direction is to consider a prior distribution of the unknown parameters, and construct the so-called Bayesian optimal design \citep{khuri2006design, amzal2006bayesian, woods2017bayesian}.
The design criterion is typically the integral of the local design criterion or efficiency with respect to the prior of the parameters.
When such integration is not analytically available, a standard solution is to sample from the prior distribution and use the weighted average of local design criteria or efficiencies as the objective function \citep{atkinson2015designs}.
Another direction is to use the minimax/maximin approach to minimize the design criterion or maximize the efficiency under the ``worst-case'' scenario.
\cite{sitter1992robust} introduced a minimax procedure for obtaining a design to deal with parameter uncertainty.
\cite{king2000minimax} proposed an efficient algorithm to construct a maximin design for the logistic regression model under D-optimality.
\cite{imhof2000graphical} developed an algorithm to maximize the minimum efficiency under two competing optimality criteria with a graphical method.
Note that existing literature on the maximin/minimax designs often focus on D-optimality and uncertainty of the unknown parameters.
The biggest challenge in maximin/minimax designs is that the design construction can be quite difficult \citep{atkinson2015designs}.

Besides the unknown parameters, there could be other uncertainties involved in a GLM, such as the specification of the link function and the linear predictor.
The literature on the designs for GLMs to deal with such kind of model uncertainty is relatively scarce.
\cite{woods2006designs} proposed a compromise design that minimizes the weighted average of the criteria, and each criterion is based on a potential model.
Later, \cite{dror2006robust} proposed using clustered local optimal designs, and showed the resulting design had a comparable performance with the compromise design through numerical examples.

In this work, we propose a new maximin $\Phi_p$-efficient design (denoted as Mm-$\Phi_p$) criterion for GLMs using the $\Phi_p$-efficiency \citep{75kiefer} and develop an efficient algorithm for design construction.
The proposed design, namely Mm-$\Phi_{p}$ design, can accommodate several types of uncertainties, including (i) uncertainty of the unknown parameter values; (ii) uncertainty of the linear predictor; and (ii) uncertainty of the link function.
Here, we focus on  \emph{approximate design} \citep{75kiefer,atkinson2014optimal}, which describes the design as a probability measure on a group of support points.
It provides the framework for us to investigate theoretical properties of the proposed design criterion, and pave a theoretical foundation to construct the efficient algorithm with desirable convergence properties.

The key idea of this work is to adopt a continuous and convex relaxation (i.e., the ``log-sum-exp'' approximation) as a tight approximation of the worst-case $\Phi_p$-efficiency with respect to uncertainty of model specifications.
%This continuous and convex relaxation is the ``log-sum-exp'' approximation to be introduced in Section \ref{subsec:criterion}.
With this relaxation, we arrive a tractable design criterion, which facilitates the theoretical investigation for developing an efficient algorithm to construct the corresponding design.
The merits of this idea is not restricted to the $\Phi_p$ criterion, even though $\Phi_p$ is already a quite general criterion including A-, D-, E-, and I-optimality criterion as special cases.
Through the demonstration of the proposed approach based on $\Phi_p$-criterion, it is apparent that this convex and smooth relaxation idea can be applied to other maximin design as long as the criterion is convex in the design.
The framework we have developed, including the general equivalence theorem and the design construction algorithm as well as its convergence, can be extended to other maximin design as well.
%Of course, the detailed formulas based on the criterion are different from the $\Phi_p$ criterion showed here.
%The log-sum-exp approximation solves the long standing issue of solving maximin or minimax design, which we consider the biggest contribution of this work.

Other main contributions of this work are summarized as follows.
First, the proposed Mm-$\Phi_p$ design criterion is very general, covering various design criteria, such as D-, A-, E-optimality for estimation accuracy and I-, EI-optimality for prediction accuracy \citep{li2018efficient}.
%$\Phi_p$-efficiency
%since the corresponding $\Phi_p$-optimality
Second, different from the Bayesian optimal design, the proposed Mm-$\Phi_p$ design is a maximin design, which avoids the choice of prior distributions on the model specifications.
Third, the proposed Mm-$\Phi_p$ design can flexibly accommodate the aforementioned three types of model uncertainties in GLM.
Finally, the proposed algorithm has impressive computational efficiency with sound theoretical properties, and can be easily modified to construct compromise designs and Bayesian optimal designs.

The rest of the article is organized as follows.
Section \ref{sec:crit+GET} describes the Mm-$\Phi_p$ design criterion and  investigates the theoretical properties.
In Section \ref{sec:algorithm}, an efficient algorithm is developed. 
Numerical examples are conducted in Section \ref{sec:examples} to examine the the performance of the proposed method.
%algorithm and compare the proposed Mm-$\Phi_p$ design with the existing methods.
We summarize the work with some discussions in Section \ref{sec:discussion}. All the technical proofs are detailed in the Appendix.

\section{The Mm-$\Phi_p$ Design Criterion and Its Properties}\label{sec:crit+GET}

Consider an experiment with $d$ design variables, $\vx = [x_1,...,x_d]$, and $x_j\in \Omega_j$, where $\Omega_j$ is a measurable domain of all possible values for $x_j$.
The experimental region, $\Omega$, is a certain measurable subset of $\Omega_1\times\cdots\times\Omega_d$.
For a GLM, the response $Y(\vx)$ is assumed to follow a distribution in the exponential family.
The link function, $h: \reals\rightarrow\reals$, provides the relationship between the linear predictor, $\eta=\vbeta^\top\vg(\vx)$, and $\mu(\vx)$, the mean of the response $Y(\vx)$ as
$\mu(\vx) = \E[Y(\vx)] = h^{-1}\left(\vbeta^\top\vg(\vx)\right)$,
where $\vg = [g_1,...,g_l]^\top$ are the known basis functions of the design variables, $\vbeta=[\beta_1,\beta_2,...,\beta_{l}]^\top$ are the corresponding regression coefficients parameters, and $h^{-1}$ is the inverse function of $h$.
The approximate design $\xi$ is defined as $\xi = \left\{\begin{array}{ccc}
\vx_1,&...,&\vx_n\\
\lambda_1,&...,&\lambda_n
\end{array}\right\}$,
where $\vx_1, \ldots \vx_{n}$ are the support points, and $0<\lambda_i<1$ represents the probability mass allocated to the corresponding support point $\vx_i$.
We use $M = (h,\vg,\vbeta)$ to denote the model specification of a GLM whose link function is $h$, basis functions are $\vg$, and the vector of the regression coefficients is $\vbeta$.
The Fisher information matrix of the GLM $M$ is
\begin{equation}\label{eqn:fisher}
\mI(\xi,M) = \sum\limits_{i=1}^n\lambda_i\vg(\vx_i)w(\vx_i,M)\vg^\top(\vx_i),
\end{equation}
where
$w(\vx_i,M) = \left[\var(Y(\vx_i))[h^{'}(\mu(\vx_i))]^2\right]^{-1}$.
Clearly, $\mI(\xi,M)$ depends all three components of $M=(h, \vg, \vbeta)$.
Various local optimal design criteria in the literature are all based on the Fisher information with a specified $M$.

\subsection{The Mm-$\Phi_p$ Design Criterion}\label{subsec:criterion}

To represent the uncertainties of a GLM, we denote the set of candidate link functions, the set of the candidate basis functions, and the domain of the regression coefficients as $\mathcal{H}$, $(\mathcal{G}|\mathcal{H})$, and ($\mathcal{B}|\mathcal{H},\mathcal{G})$, respectively.
The notation of conditioning presents the dependence of basis functions $\vg$ on the choice of link function $h$, and the dependence of regression coefficients $\vbeta$ on the choice of both $h$ and $\vg$.
The set $\mathcal{M} = \{M = (h,\vg,\vbeta): h\in\mathcal{H}, \vg\in(\mathcal{G}|\mathcal{H}), \vbeta\in (\mathcal{B}|\mathcal{H},\mathcal{G}) \}$ contains all model specifications of interest.

In the optimal design theory, \emph{efficiency} is a popular and scale-free performance measurement to compare the designs for a given criterion.
Specifically, for a generic design criterion $\Psi(\xi,\mathcal{M})$, which is to be minimized, the efficiency of a design $\xi$ relative to another design $\xi'$ is defined as \citep{06atk}
\begin{equation}\label{defn:minefficiency}
\eff_{\Psi}(\xi,\xi';\mathcal{M}) = \frac{\Psi(\xi',\mathcal{M})}{\Psi(\xi,\mathcal{M})}.
\end{equation}
%Analogously, for a design criterion $\tilde{\Psi}(\xi,\mathcal{M})$, which is to be maximized, the efficiency of a design $\xi$ relative to another design $\xi'$ is defined as
%\begin{equation}\label{defn:maxefficiency}
%\Eff_{\tilde{\Psi}}(\xi,\xi';\mathcal{M}) = \frac{\tilde{\Psi}(\xi,\mathcal{M})}{\tilde{\Psi}(\xi',\mathcal{M})}.
%\end{equation}
Using such a definition of efficiency, the design $\xi$ is more efficient than design $\xi'$ as long as the efficiency in \eqref{defn:minefficiency}
% or in \eqref{defn:maxefficiency}
is larger than 1.
When a single model specification is considered, i.e., $\mathcal{M} = \{M\}$, the criterion $\Psi$ becomes a local optimal design criterion.
When multiple specifications are considered, the criterion $\Psi$ corresponds to a global optimal design criterion, such as Bayesian optimality, compromise design optimality, minimax/maximin optimality, etc.

Throughout this work, for a specified model $M$, we use the generalized $\Phi_p$-optimality introduced in \cite{kiefer1974general}, which is
\begin{eqnarray}\label{eq:phi_p}
\Phi_p(\xi,M) = \left(q^{-1}\tr\left[\frac{\partial \vf(\vbeta)}{\partial \vbeta^{\top}}\mI(\xi,M)^{-1}\left(\frac{\partial \vf(\vbeta)}{\partial \vbeta^{\top}}\right)^{\top}\right]^p\right)^{1/p},\,\,\,0<p<\infty,
\end{eqnarray}
where $\vf(\vbeta) = [f_1(\vbeta),...,f_q(\vbeta)]^{\top}$ are some  functions of $\vbeta$.
%\ny{The referee commented: ``Are the authors ignoring D-optimality, which would be the limit as p$\rightarrow$0? It is used later in the manuscript, so should be defined here, too." The $\Phi_p$-optimality is essentially D-optimality as $p\rightarrow 0$ and E-optimality as $p\rightarrow\infty$.}
Common examples are linear contrasts of the coefficients, such as $\beta_k$ and $\beta_j-\beta_{j'}$.
Note that the $\Phi_p$-optimality is essentially D-optimality as $p\rightarrow 0$ and E-optimality as $p\rightarrow\infty$.
Let us denote $\optdesM$ to be the local optimal design which minimizes the $\Phi_p$-criterion for model $M$.
According to \eqref{defn:minefficiency}, the $\Phi_p$-efficiency of any design $\xi$ relative to local optimal design $\optdesM$ given a specific $M = (h,\vg,\vbeta)$ is
\begin{eqnarray}\label{eqn:phieff}
\eff_{\Phi_p}(\xi,\optdesM;M) = \frac{\Phi_p(\optdesM,M)}{\Phi_p(\xi,M)}.
\end{eqnarray}
It is obvious that $0 \le \eff_{\Phi_p}(\xi,\optdesM;M) \le 1$ for any $\xi$, and the larger the $\Phi_p$-efficiency is, the more efficient the design $\xi$ is.
Under the idea of global maximin design, we consider the maximin $\Phi_p$-efficient design, which maximizes the smallest possible $\eff_{\Phi_p}(\xi,\xi^{\opt}_M;M)$ over all $M\in \mathcal{M}$.
That is, we consider a maximin design as
\begin{align}\label{eqn:orirobustdesign}
\xi^{*}&=\argmax\limits_{\xi} \inf_{M\in\mathcal{M}} \eff_{\Phi_p}(\xi,\optdesM;M).
\end{align}

In the optimization problem \eqref{eqn:orirobustdesign}, the infimum is used instead of minimum because it is not certain whether the minimum is attainable.
To simplify the problem, we take a closer look at the model set $\mathcal{M}$.
In practice, $\mathcal{H}$ usually contains a few potential link functions.
For example, the link function of Poisson regression for counting data is $h(\mu(\vx)) = \ln(\mu(\vx))$, and the link function of a GLM for binary data could be logistic function $h(\mu(\vx)) = \ln\left(\frac{\mu(\vx)}{1-\mu(\vx)}\right)$, or probit function $h(\mu(\vx)) = \Phi^{-1}(\mu(\vx))$, or a complementary log-log function $h(\mu(\vx)) = \ln(-\ln(1-\mu(\vx)))$.
The set of the candidate basis functions $(\mathcal{G}|\mathcal{H})$ is often finite too.
The typical basis functions used in GLMs are linear and/or higher-order polynomials of $\vx$.
Note that $(\mathcal{B}|\mathcal{H},\mathcal{G})$, the domain of $\vbeta$, often is uncountable since $\vbeta$ is considered to be continuous.
Consequently, the set $\mathcal{M}$ is an uncountable set, which may not ensure an attainable minimum.
A common remedy \citep{dror2006robust, woods2006designs,atkinson2015designs,woods2017bayesian} is to discretize $(\mathcal{B}|\mathcal{H, G})$ and create a finite and countable subset ($\mathcal{B}'|\mathcal{H},\mathcal{G})$.
The corresponding surrogate set $\mathcal{M}' = \{M = (h,\vg,\vbeta): h\in\mathcal{H}, \vg\in(\mathcal{G}|\mathcal{H}), \vbeta\in (\mathcal{B}'|\mathcal{H},\mathcal{G})\}$ is also a subset of the original $\mathcal{M}$.
Replacing $\mathcal{M}$ by $\mathcal{M'}$ in \eqref{eqn:orirobustdesign}, the solution of
\begin{eqnarray}\label{eqn:robustdesign}
\xi^{*} = \argmax \limits_{\xi} \min_{M\in\mathcal{M}'} \left[\eff_{\Phi_p}(\xi,\optdesM;M)\right]
\end{eqnarray}
is a sub-optimal solution of \eqref{eqn:orirobustdesign}.
When the discretization is adequate to form a close approximation of $\mathcal{M}$, the sub-optimal solution is expected to be close to the original optimal solution.

The design criterion in \eqref{eqn:robustdesign} is still a challenging optimization due to the non-smooth objective function $\min_{M\in\mathcal{M}'} \left[\eff_{\Phi_p}(\xi,\optdesM;M)\right]$
\citep{wong1992unified, wong1993heteroscedastic, schwabe1997maximin, king1998optimal, atkinson2015designs}.
We consider using ``Log-Sum-Exp" as a tight and smooth approximation to the minimum function, which is widely used in machine learning \citep{calafiore2014optimization}.
With the ``Log-Sum-Exp", one can have
\begin{align} \label{eq:lsebounds}
& \left[\ln\left(\sum\limits_{j=1}^m \exp\left(\frac{1}{\eff_{\Phi_p}(\xi,\optdesMj;M_j)}\right) \right)\right]^{-1} \le \min_{M\in\mathcal{M}'} \eff_{\Phi_p}(\xi,\optdesM;M)  \nonumber \\
& \leq \left[\ln\left(\sum\limits_{j=1}^m \exp\left(\frac{1}{\eff_{\Phi_p}(\xi,\optdesMj;M_j)}\right)\right)-\ln(m)\right]^{-1},
\end{align}
where $m$ is the cardinality of $\mathcal{M}'$, i.e., the number of potential model specifications in $\mathcal{M}'$.
The equality in the first inequality is obtained when $m=1$, and the equality in the second inequality holds when $\eff_{\Phi_p}(\xi,\optdesMj;M_j)$ remains the same for all $M_j\in\mathcal{M}'$.
Thus
maximizing $\left [ \ln\left(\sum\limits_{j=1}^m \exp\left(\frac{1}{\eff_{\Phi_p}(\xi,\optdesMj;M_j)}\right)\right)\right ]^{-1}$ leads to maximizing both the lower and upper bound of the worst (or the smallest) $\Phi_p-$efficiency.
Therefore, instead of solving \eqref{eqn:robustdesign}, which involves an inner minimization of $\Phi_p$-efficiency,
we propose to use the ``Log-Sum-Exp" approximation of the worst-case $\Phi_p$-efficiency as the design criterion, which is to minimize
\begin{equation}\label{eqn:lse}
\lse(\xi,\mathcal{M}') \triangleq   \ln\left(\sum\limits_{j=1}^m \exp\left(\frac{1}{\eff_{\Phi_p}(\xi,\optdesMj;M_j)}\right)\right).
\end{equation}
Minimizing $\lse(\xi,\mathcal{M}')$ is the same as maximizing $\left [ \ln\left(\sum\limits_{j=1}^m \exp\left(\frac{1}{\eff_{\Phi_p}(\xi,\optdesMj;M_j)}\right)\right)\right ]^{-1}$
since $\ln\left(\sum\limits_{j=1}^m \exp\left(\frac{1}{\eff_{\Phi_p}(\xi,\optdesMj;M_j)}\right)\right)>0$.
We call the $\lse(\xi,\modelspace)$, which aims at maximizing the minimal $\Phi_p$-efficiency, the Mm-$\Phi_p$ criterion.
The design that minimizes $\lse(\xi,\modelspace)$ is called the Mm-$\Phi_p$ design for the  surrogate model set $\modelspace$, denoted by $\robdes$.

It is obvious that minimizing $\lse(\xi,\mathcal{M}')$ is equivalent to minimizing
\begin{align}\label{eqn:criterion}
\se(\xi,\modelspace) = \triangleq \sum\limits_{j=1}^m \exp\left(\frac{1}{\eff_{\Phi_p}(\xi,\optdesMj;M_j)}\right)
                      = \sum\limits_{j=1}^m \exp\left(\frac{\Phi_p(\xi,M_{j})}{\Phi_p(\optdesMj,M_{j})}\right).
\end{align}
That is to say $\robdes = \argmin\limits_{\xi} \se(\xi,\modelspace)=\argmin\limits_{\xi} \lse(\xi,\modelspace)$.

In Section \ref{subsec:robvscom} and \ref{subsec:theory}, we first compare the proposed maximin $\Phi_p$-efficient design $\robdes$ with the well-known compromise design in \cite{woods2006designs}
and then show the convexity of $\se(\xi,\modelspace)$ with respect to $\xi$, as well as the necessary and sufficient conditions of the Mm-$\Phi_p$ design $\robdes$.

\subsection{Connection to Compromise Design}\label{subsec:robvscom}
\cite{woods2006designs} proposed a compromise design that optimizes the weighted average of certain  criteria, where each criterion is based on a potential model from some prior.
It means that the compromise design requires a prior distribution $p(M)$ for the model specifications $M\in\mathcal{M}'$.
The prior distribution can be as simple as a uniform distribution or other informative distributions.

There can be two different ways to define a compromise design.
The first way aims at maximizing a weighted average of the local $\Phi_p$-efficiencies.
That is
\[
\effcomdes = \argmax_{\xi} \sum_{j=1}^m p(M_j)\eff_{\Phi_p}(\xi,\optdesMj;M_j),
\]
and it is henceforth called the eff-compromise design.
Clearly, this averaged local efficiencies is not smaller than the reciprocal of $\lse(\xi,\modelspace)$ since
\[
[\lse(\xi,\modelspace)]^{-1}\leq \min_{M\in\mathcal{M}'} \eff_{\Phi_p}(\xi,\optdesM;M)\leq  \sum_{j=1}^m p(M_j)\eff_{\Phi_p}(\xi,\optdesMj;M_j).
\]
Thus the compromise design maximizes an upper bound of the worst $\Phi_p$-efficiency.
This is not as ideal as $\lse(\xi,\modelspace)$.
Minimizing $\lse(\xi,\modelspace)$ simultaneously maximizes a lower and an upper bound of the worst $\Phi_p$-efficiency (see \eqref{eq:lsebounds}),
even though the two upper bounds $\left[\lse(\xi,\mathcal{M}')-\ln(m)\right]^{-1}$ and $\sum_{j=1}^m p(M_j)\eff_{\Phi_p}(\xi,\optdesMj;M_j)$ can be both attainable, depending on the prior distributions.
%and it is not certain which one is more restrictive.}

Another type of compromise design is to minimize the weighted average of local $\Phi_p$-criterion. That is
\[
\phicomdes = \argmin_{\xi} \sum_{j=1}^m p(M_j)\Phi_p(\xi, M_j),
\]
which is henceforth called the $\Phi_p$-compromise design.
Such a design criterion is more consistent with the classic Bayesian optimal design.
According to \cite{woods2006designs} and \cite{atkinson2015designs}, the Bayesian optimal design can be considered as a special case of the compromise design, as the former only deals with the uncertainty of the unknown parameters of the GLMs, whereas the compromise design handles all three kinds of uncertainties that are listed previously, including uncertainty of the parameters.
We would like to point out that the $\Phi_p$-compromise design can be sensitive to the choice of the prior distribution, especially when the optimal criterion values of different model specifications are very different. On the contrary, $\lse(\xi,\modelspace)$ does not assume any prior distribution and is robust to all the choices of the prior distribution of model specifications.

\subsection{General Equivalence Theorem}\label{subsec:theory}

To develop an efficient algorithm to construct the Mm-$\Phi_p$ design, we study the convexity of the objective function $\se(\xi,\modelspace)$ with respect to $\xi$, and summarize the necessary and sufficient conditions of the Mm-$\Phi_p$ design $\robdes$ in a General Equivalence Theorem.
To make this part concise, we list the major results here and place the lemmas and the proofs in the Appendix.

For a model specification $M_j\in \mathcal{M}'$, we simplify the notation of the information matrix $\mI(\xi,M_j)$ to be $\mI_j(\xi)$, the weight function $w(\vx,M_j)$ in \eqref{eqn:fisher} to be $w_j(\vx)$, the $\Phi_p$-criterion value of a design  $\Phi_p(\xi, M_j)$ to be $\Phi_p^j(\xi)$, and the $\Phi_p$-criterion value $\Phi_p(\optdesMj, M_j)$ of the local optimal design to be $\Phi_p^{\opt_j}$.
Then, we can rewrite $\se(\xi,\modelspace)$ as $\se(\xi,\modelspace)=\sum\limits_{j=1}^m \exp\left(\frac{\Phi_p^j(\xi)}{\Phi_p^{\opt_j}}\right)$.
Lemma \ref{lem:SeConvex} in the Appendix proves the convexity of $\se(\cdot,\modelspace)$ with respect to $\xi$.
Given two designs $\xi$ and $\xi'$, the directional derivative of $\se(\xi,\modelspace)$ in the direction of $\xi'$ is defined as follows.
\begin{equation}\label{defn:dirder}
\nabla_{\xi'}\se(\xi,\modelspace):=\phi(\xi',\xi) = \lim\limits_{\alpha\rightarrow 0^+}\frac{\se((1-\alpha)\xi+\alpha \xi',\modelspace)-\se(\xi,\modelspace)}{\alpha}, \quad \alpha\in [0,1].
\end{equation}
Lemma \ref{lem:dirder} in the Appendix derives the concrete formula of $\phi(\xi',\xi)$.
If $\xi'$ only contains a single support point $\vx$ with corresponding weight $\lambda=1$, the directional derivative of $\se(\xi,\modelspace)$ in the direction of $\xi'$ is a special case of Lemma \ref{lem:dirder}.
We denote this directional derivative as $\phi(\vx,\xi)$, and give its formula in Lemma \ref{lem:dirder2} in the Appendix.
Following Lemma \ref{lem:dirder2}, we also provide the specific formulas of $\phi(\vx,\xi)$ for the D-, A- and EI-optimality.
With these results, we can obtain the General Equivalence Theorem \ref{thm:equi_thm} for the Mm-$\Phi_p$ design that maximizes $\lse(\xi,\modelspace)$, or equivalently minimizes $\se(\xi,\modelspace)$.

\begin{thm}[General Equivalence Theorem]\label{thm:equi_thm}
The following three conditions of a design $\robdes$ are equivalent:
\begin{enumerate}
\item
The design $\robdes$ minimizes $\lse(\xi,\modelspace)$ and $\se(\xi,\modelspace)$.
\item
$\phi(\vx,\robdes)\geq 0$ holds for any $\vx\in\Omega$, and the inequality becomes equality if $\vx$ is a support point of the design $\robdes$.
\end{enumerate}
\end{thm}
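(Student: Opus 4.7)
The plan is to follow the standard convex-analytic paradigm for General Equivalence Theorems: use the convexity of $\se(\cdot, \modelspace)$ to replace the minimization statement with a first-order condition, then reduce that first-order condition to the pointwise inequality $\phi(\vx, \robdes) \geq 0$ by exploiting the linearity of the directional derivative in its ``direction'' argument.

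First I would invoke Lemma \ref{lem:SeConvex} to note that $\se(\xi, \modelspace)$ is convex on the convex set of approximate designs (probability measures on $\Omega$). A standard fact from convex analysis then gives that $\robdes$ minimizes $\se(\cdot, \modelspace)$ if and only if the one-sided directional derivative $\phi(\xi', \robdes)$ is non-negative for every competing design $\xi'$. Since minimizing $\se$ is equivalent to minimizing $\lse$ (monotone transformation via $\ln$), this also characterizes minimizers of $\lse$.

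Next I would use the explicit formula for $\phi(\xi', \xi)$ from Lemma \ref{lem:dirder}, which expresses the directional derivative as an integral/finite sum of a kernel against the measure $\xi'$. This shows $\phi(\cdot, \robdes)$ is linear (more precisely, affine with value $0$ at $\xi' = \robdes$) in its first argument, so for any $\xi' = \sum_i \lambda'_i \delta_{\vx'_i}$ we obtain
\begin{equation*}
\phi(\xi', \robdes) = \sum_i \lambda'_i \, \phi(\vx'_i, \robdes).
\end{equation*}
Since every approximate design is such a convex combination of point masses, $\phi(\xi', \robdes) \geq 0$ for every $\xi'$ holds if and only if $\phi(\vx, \robdes) \geq 0$ for every $\vx \in \Omega$. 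This gives the equivalence of the two listed conditions modulo the equality-at-support-points clause.

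For that clause, write $\robdes = \sum_k \lambda_k^{*} \delta_{\vx_k^{*}}$ with $\lambda_k^{*} > 0$. Taking $\xi' = \robdes$ in definition \eqref{defn:dirder} gives $(1-\alpha)\robdes + \alpha \robdes = \robdes$, so $\phi(\robdes, \robdes) = 0$; combined with the linearity identity above this yields $\sum_k \lambda_k^{*} \phi(\vx_k^{*}, \robdes) = 0$. Each summand is non-negative by the pointwise condition and each $\lambda_k^{*}$ is strictly positive, so $\phi(\vx_k^{*}, \robdes) = 0$ at every support point, which completes the argument in both directions.

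The main obstacle is making sure the directional derivative formula from Lemma \ref{lem:dirder} really does distribute linearly over mixtures $\xi' = \sum_i \lambda'_i \delta_{\vx'_i}$ in the first argument; this should follow routinely from the chain rule applied to the affine perturbation $(1-\alpha)\xi + \alpha\xi'$ entering $\mI_j(\cdot)$ linearly, but it must be checked explicitly so that the reduction to point-mass directions is watertight. Everything else is bookkeeping built on the convexity lemma.
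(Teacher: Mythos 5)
Your proposal is correct and takes essentially the same route as the paper, which simply cites the convexity of $\se(\cdot,\modelspace)$ (Lemma \ref{lem:SeConvex}) for both directions; in fact your writeup supplies the details the paper leaves implicit, namely the reduction from general directions $\xi'$ to point masses via the linearity of $\phi(\xi',\robdes)$ in $\xi'$ and the argument that $\phi(\robdes,\robdes)=0$ forces equality at support points with positive weight. No gaps.
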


%\begin{proof}
%\begin{enumerate}[i.~]
%\item
%$(1)\rightarrow (2)$: As for any $\xi$, $\lse(\robdes,\modelspace)\geq \lse(\xi,\modelspace)$ and $\lse(\xi,\modelspace)\geq 0$, it is obvious that $\exp\left((\lse(\robdes,\modelspace))^{-1}\right)\leq \exp\left((\lse(\xi,\modelspace))^{-1}\right)$, i.e., $\se(\robdes,\modelspace)\leq \se(\xi,\modelspace)$, and $\robdes$ minimizes $\se(\xi,\modelspace)$.
%\item
%$(2)\rightarrow (1)$: As for any $\xi$, $\se(\robdes,\modelspace)\leq \se(\xi,\modelspace)$ and $\se(\xi,\modelspace)\geq 1$, it is obvious that $\left[\ln(\se(\robdes,\modelspace))\right]^{-1}\geq \left[\ln(\se(\robdes,\modelspace))\right]^{-1}$, i.e., $\lse(\robdes,\modelspace)\geq \lse(\xi,\modelspace)$, and $\robdes$ maximizes $\lse(\xi,\modelspace)$.
%\item
%$(2)\rightarrow (3)$: As $\se(\cdot,\modelspace)$ is a convex function in $\xi$ proved in Lemma \ref{lem:SeConvex}, the directional derivative $\phi(\vx,\robdes)\geq 0$ holds for any $\vx\in\Omega$, and the inequality becomes equality if $\vx$ is a support point of the design $\robdes$.
%\item
%$(3) \rightarrow (2)$: If $\phi(\vx,\robdes)\geq 0$ holds for any $\vx\in\Omega$, then $\robdes$ minimizes $\se(\xi,\modelspace)$ as  $\se(\cdot,\modelspace)$ is a convex function in $\xi$.
%\end{enumerate}
%\end{proof}}

The General Equivalence Theorem \ref{thm:equi_thm} for the $\lse$ criterion in \eqref{eqn:lse} provides important guidelines on how the support points of the Mm-$\Phi_p$ design should be added in a sequential manner.
The proposed algorithm for the Mm-$\Phi_p$ design (detailed in Section \ref{sec:algorithm}) iterates between adding the support point and updating the weights $\lambda_i$'s, i.e., which can be considerd as a Fedorov-Wynn type of algorithm \citep{dean2015handbook}.
%Similar to existing Wynn-Fedorov algorithms,
In each step of an iteration, we add one design point $\vx^*$ into the current design as a support point, if $\vx^*$ meets the following two conditions.
The first condition is that its directional derivative is negative, $\phi(\vx^*,\xi)<0$.
Otherwise, if there does not exist an $\vx\in \Omega$ such that $\phi(\vx,\xi)<0$, then $\xi$ already reaches the optimal.
The second condition is that the directional derivative $\vx^*$ reaches the minimum, or the \emph{size} of the directional derivative is maximal compared to other possible points whose directional derivative values are also negative.
This condition leads to the maximum reduction of $\se(\xi,\modelspace)$ if $\vx^*$ is added to $\xi$.

After the design point $\vx^*$ is added, the weights of all design points in the current design need to be updated.
Thus, it is important to investigate the property of the optimal weights when the design points are given.
Given design points $\vx_1,\vx_2,...,\vx_n$, the weight vector $\vlambda = [\lambda_1,\lambda_2,\ldots,\lambda_n]^\top$ is the only variable for the design.
We emphasize this by adding a superscript $\vlambda$ in the notation of the design and denote it as
$\xi^{\vlambda} = \Big\{\begin{array}{ccc}
\vx_1,&...,&\vx_n\\
\lambda_1,&...,&\lambda_n
\end{array}\Big\}$.
%$$\xi^{\vlambda} = \left\{\begin{array}{ccc}
%\vx_1,&...,&\vx_n\\
%\lambda_1,&...,&\lambda_n
%\end{array}\right\}$.
Consider $\se(\xi^{\vlambda},\modelspace)$ as a function of $\vlambda$, i.e.,
\begin{eqnarray}\label{eqn:weightlb}
\se(\cdot,\modelspace): \{\vlambda = (\lambda_1,\cdots,\lambda_n): \lambda_i > 0,\sum\lambda_i=1\}\mapsto \sum\limits_{j=1}^m \exp\left(\frac{\Phi_p(\xi^{\vlambda},M_j)}{\Phi_p^{\opt_j}}\right).
\end{eqnarray}
The optimal weight vector $\vlambda^*$ should be the one that minimizes $\se(\xi^{\vlambda},\modelspace)$ with the given support points $\vx_1,...,\vx_n$.
Lemma \ref{lem:ConvWeights} in the Appendix proves the convexity of $\se(\xi^{\vlambda},\modelspace)$ with respect to $\vlambda$.
Corollary \ref{thm:equi_weight} provides a sufficient and necessary condition on the optimal weights for a design whose support points are fixed.
It is a special case of Theorem \ref{thm:equi_thm} when the experimental region is restricted to the set $\Omega = \{\vx_1,...,\vx_n\}$.
\begin{cor}[Conditions of Optimal Weights]\label{thm:equi_weight}
Given a set of design points $\vx_1,...,\vx_n$, the following three conditions on the weight vector $\vlambda^* = [\lambda^*_1,...,\lambda^*_n]^\top$ are equivalent:
\begin{enumerate}
\item
The weight vector $\vlambda^*$ minimizes $\lse(\xi^{\vlambda},\modelspace)$ and $\se(\xi^{\vlambda},\modelspace)$.
\item
 For all $\vx_i$, with $\lambda_i^*>0$, $\phi(\vx_i,\xi^{\vlambda^*}) = 0;$
for all $\vx_i$ with $\lambda_i^*=0$, $\phi(\vx_i,\xi^{\vlambda^*}) \geq 0.$
\end{enumerate}
\end{cor}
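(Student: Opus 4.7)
The plan is to derive the corollary as an immediate consequence of Theorem~\ref{thm:equi_thm} by specializing the experimental region. Fixing the support points $\vx_1,\ldots,\vx_n$ and only optimizing over the weight vector $\vlambda=[\lambda_1,\ldots,\lambda_n]^\top$ is precisely the situation of searching for a Mm-$\Phi_p$ design when the experimental region is taken to be the finite set $\Omega_0 = \{\vx_1,\ldots,\vx_n\}$: every probability measure supported on $\Omega_0$ is of the form $\xi^{\vlambda}$ with $\lambda_i\ge 0$ and $\sum_i\lambda_i=1$. Since the definition of the directional derivative $\phi(\vx,\xi)$ in \eqref{defn:dirder} and its explicit formula from Lemma~\ref{lem:dirder2} depend only on pointwise evaluations at $\vx$ and on the current design $\xi$, restricting $\Omega$ to $\Omega_0$ leaves $\phi(\vx_i,\xi^{\vlambda^*})$ unchanged.

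Next I would invoke Theorem~\ref{thm:equi_thm} with $\Omega$ replaced by $\Omega_0$. The theorem yields that $\xi^{\vlambda^*}$ minimizes $\lse(\cdot,\modelspace)$ and $\se(\cdot,\modelspace)$ over designs on $\Omega_0$ if and only if $\phi(\vx_i,\xi^{\vlambda^*})\ge 0$ for every $i=1,\ldots,n$, with equality at each support point of $\xi^{\vlambda^*}$. The final step is to translate the notion of ``support point'' of $\xi^{\vlambda^*}$ into a statement about the components of $\vlambda^*$: a point $\vx_i\in\Omega_0$ is a support point of $\xi^{\vlambda^*}$ precisely when $\lambda_i^*>0$, while $\lambda_i^*=0$ means $\vx_i$ is not a support point. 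Inserting this identification into the conclusion of Theorem~\ref{thm:equi_thm} gives exactly the two conditions stated in the corollary.

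One small subtlety worth spelling out is the equivalence between minimizing $\lse(\xi^{\vlambda},\modelspace)$ and minimizing $\se(\xi^{\vlambda},\modelspace)$ over the simplex of weights; this follows because $\lse=\ln\circ\,\se$ with $\se>0$, and $\ln$ is strictly increasing, so the two functions share the same argmin. Combined with the convexity of $\se(\xi^{\vlambda},\modelspace)$ in $\vlambda$ (Lemma~\ref{lem:ConvWeights}), which guarantees that the first-order optimality conditions encoded by the directional derivatives are both necessary and sufficient, the corollary follows. I do not expect any serious obstacle here; the main care is in verifying that the directional derivative used in Theorem~\ref{thm:equi_thm} coincides, when $\xi'$ is a point mass at $\vx_i$, with the partial derivative of $\se(\xi^{\vlambda},\modelspace)$ with respect to $\lambda_i$ in the direction toward the $i$-th vertex of the simplex, so that the KKT conditions for minimizing $\se$ on the simplex match the equivalence-theorem conditions line by line.
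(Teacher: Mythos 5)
Your proposal is correct and follows exactly the route the paper takes: the paper itself justifies Corollary~\ref{thm:equi_weight} by observing that it is the special case of Theorem~\ref{thm:equi_thm} with the experimental region restricted to $\{\vx_1,\ldots,\vx_n\}$, with support points identified as those $\vx_i$ having $\lambda_i^*>0$. Your additional remarks on the invariance of $\phi(\vx_i,\xi^{\vlambda^*})$ under this restriction and on Lemma~\ref{lem:ConvWeights} only make explicit what the paper leaves implicit.
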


\section{Efficient Algorithm of Constructing Mm-$\Phi_{p}$ Design}\label{sec:algorithm}
This section details the proposed sequential algorithm, named as \textbf{Mm-$\Phi_{p}$ Algorithm}, to construct the Mm-$\Phi_p$ design $\robdes$.
The proposed algorithm has a sound theoretical rationale as well as impressive computational efficiency.
The key idea of the proposed algorithm is as follows.
In each sequential iteration, a new design point $\vx^*$ with the smallest negative value of directional derivative $\vx^* = \argmin\limits_{\vx} \phi(\vx,\xi)<0$ is added to the current design,
and then the Optimal-Weight Procedure (detailed in Section \ref{sec: weight updating}) is used to optimize the weights of the current design points.
The stopping rule of the proposed sequential algorithm can be determined based on the efficiency of the obtained design.

The proposed \textbf{Mm-$\Phi_{p}$ Algorithm}, following a similar spirit as the sequential Wynn-Fedorov type algorithm, is to add the new design point after the optimal weights of the existing design points are achieved.
In each iteration, the design point that minimizes the directional derivative $\phi(\vx,\xi)$ will be added into the design to gain a maximum reduction of $\se$ criterion value.
Then, the weights of all design points in the current design are optimized, which will be described in Section \ref{sec: weight updating}.
Theoretically, the algorithm should terminate until the directional derivatives of all candidate design points in the experimental region are nonnegative.
However, this stopping rule is impractical since it requires many iterations to make all the directional derivative values strictly positive (numerically it is unlikely to have exactly zero cases).
To address this issue, we consider terminating the algorithm when the design efficiency is large enough, say close to 1.
Such a stopping criterion is much better than terminating the algorithm when the directional derivative $\min\limits_{\vx\in\Omega}\phi(\vx,\xi)>\epsilon$ with a small negative $\epsilon$.
The drawback of the later rule is that the choice of $\epsilon$ does not directly reflect the quality of the achieved design, since $\phi(\vx,\xi)$ is the directional derivative.

Following the general definition of design efficiency in \eqref{defn:minefficiency}, we denote the efficiency of a design $\xi$ relative to the Mm-$\Phi_p$ design $\robdes$ that minimizes the Mm-$\Phi_p$ criterion $\lse$ as:
\begin{equation}\label{eqn:robusteff}
\Eff_{\lse}(\xi,\robdes;\modelspace) = \frac{\lse(\robdes,\modelspace)}{\lse(\xi,\modelspace)}.
\end{equation}
Since $\Eff_{\lse}(\xi,\robdes;\modelspace)$ involves $\robdes$, which is unknown, we derive a lower bound of it in Theorem \ref{thm:lowerboundeff}.
Instead of using $\Eff_{\lse}(\xi,\robdes;\modelspace)$ as the stopping rule, we can use the lower bound of $\Eff_{\lse}(\xi,\robdes;\modelspace)$ as the stopping rule.

\setcounter{lem}{4}
\begin{lem}\label{lem:ineqofdirder}
For any design $\xi$ and the Mm-$\Phi_p$ design $\robdes$ that minimizes $\se(\xi,\modelspace)$ or equivalently minimizes $\lse(\xi,\modelspace)$, the following inequality holds:
$$\min\limits_{\vx\in\Omega} \phi(\vx,\xi)\leq \phi(\robdes,\xi)\leq \se(\robdes,\modelspace)-\se(\xi,\modelspace)\leq 0,$$
where $\phi(\vx,\xi)$ and $\phi(\robdes,\xi)$  are the directional derivatives defined in \eqref{defn:dirder}.
\end{lem}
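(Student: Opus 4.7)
The plan is to establish the three inequalities from right to left, exploiting the convexity of $\se(\cdot,\modelspace)$ (Lemma \ref{lem:SeConvex}) and the explicit formula for the directional derivative (Lemma \ref{lem:dirder}). The rightmost inequality $\se(\robdes,\modelspace)-\se(\xi,\modelspace)\leq 0$ is immediate, since $\robdes$ is by definition the minimizer of $\se(\cdot,\modelspace)$.

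For the middle inequality, I would invoke convexity of $\se(\cdot,\modelspace)$: for any $\alpha\in(0,1]$,
$$\se((1-\alpha)\xi+\alpha\robdes,\modelspace)\leq (1-\alpha)\se(\xi,\modelspace)+\alpha\,\se(\robdes,\modelspace).$$
Subtracting $\se(\xi,\modelspace)$, dividing by $\alpha$, and sending $\alpha\to 0^{+}$ converts the left-hand side into $\phi(\robdes,\xi)$ via \eqref{defn:dirder}, yielding $\phi(\robdes,\xi)\leq\se(\robdes,\modelspace)-\se(\xi,\modelspace)$.

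The leftmost inequality is the main obstacle, and is where the structure of $\phi$ on the simplex of probability measures has to be unpacked. The key observation is that $\xi'\mapsto\phi(\xi',\xi)$ is affine in $\xi'$ and actually behaves linearly across convex combinations of probability measures. This can be read off from Lemma \ref{lem:dirder}: the chain rule applied to the exponentials and to $\Phi_p^{j}(\cdot)$ produces a derivative that pairs a ``gradient'' depending only on $\xi$ with the perturbation $\mI(\xi',M_j)-\mI(\xi,M_j)$, and the map $\xi'\mapsto\mI(\xi',M_j)$ is linear in the design measure. Writing $\robdes=\sum_{k=1}^{n^{*}}\lambda_{k}^{*}\delta_{\vx_{k}^{*}}$ with $\sum_{k}\lambda_{k}^{*}=1$, the affine offsets involving $\mI(\xi,M_j)$ cancel precisely because the weights sum to one, so that
$$\phi(\robdes,\xi)=\sum_{k=1}^{n^{*}}\lambda_{k}^{*}\,\phi(\vx_{k}^{*},\xi)\geq\Bigl(\min_{\vx\in\Omega}\phi(\vx,\xi)\Bigr)\sum_{k=1}^{n^{*}}\lambda_{k}^{*}=\min_{\vx\in\Omega}\phi(\vx,\xi).$$
Chaining this with the two inequalities above completes the proof. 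The main technical care lies in justifying the linear-combination identity rigorously from the formula in Lemma \ref{lem:dirder}; this amounts to tracking the cancellation of affine offsets and relies only on the linearity of $\xi'\mapsto\mI(\xi',M_j)$ and on the fact that the coefficients multiplying this perturbation depend solely on $\xi$ and on the local optima $\Phi_p(\optdesMj,M_j)$, not on $\xi'$.
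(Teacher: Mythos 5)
Your proof is correct and follows essentially the same route as the paper's: the rightmost inequality from optimality of $\robdes$, the middle one from convexity of $\se(\cdot,\modelspace)$ applied inside the limit defining the directional derivative, and the leftmost one from the identity $\phi(\robdes,\xi)=\sum_k\lambda_k^*\,\phi(\vx_k^*,\xi)$, which the paper likewise obtains from the linearity of $\xi'\mapsto\mI_j(\xi')$ and the fact that the weights sum to one. The only difference is cosmetic (you argue right to left, the paper left to right), so no further comment is needed.
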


\begin{thm}[A Lower Bound of $\lse$-Efficiency]\label{thm:lowerboundeff}
Design $\robdes$ is the Mm-$\Phi_p$ design that minimizes $\lse$ criterion in \eqref{eqn:lse}.
The $\lse$-efficiency defined in \eqref{eqn:robusteff} of any design $\xi$ relative to $\robdes$ is bounded below by
$$\Eff_{\lse}(\xi,\robdes;\modelspace) \geq  1+2\frac{\min\limits_{\vx\in\Omega} \phi(\vx,\xi)}{\se(\xi,\modelspace)}.$$
\end{thm}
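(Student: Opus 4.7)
Proof proposal. The plan is to combine convexity of the criterion $\lse(\cdot,\modelspace)$ with the directional-derivative inequality in Lemma~\ref{lem:ineqofdirder}, and then upgrade the resulting prefactor from $1$ to $2$ using a mild lower bound on $\lse(\xi,\modelspace)$.

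First I would establish convexity of $\lse(\xi,\modelspace)$ as a function of $\xi$. Each ratio $\Phi_p^j(\xi)/\Phi_p^{\opt_j}$ is convex in $\xi$ (the property that underlies Lemma~\ref{lem:SeConvex}), and the outer log-sum-exp map $F(y_1,\ldots,y_m)=\ln\!\left(\sum_j e^{y_j}\right)$ is convex and nondecreasing in each coordinate, so convexity of $\lse(\cdot,\modelspace)$ follows by composition. The first-order characterization of convexity then yields
\[
\lse(\robdes,\modelspace) \;\ge\; \lse(\xi,\modelspace) + \nabla_{\robdes}\lse(\xi,\modelspace),
\]
and a short chain-rule computation shows $\nabla_{\robdes}\lse(\xi,\modelspace) = \phi(\robdes,\xi)/\se(\xi,\modelspace)$.

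Next, I would apply Lemma~\ref{lem:ineqofdirder} to replace $\phi(\robdes,\xi)$ by the smaller quantity $\min_{\vx\in\Omega}\phi(\vx,\xi)$ and divide through by $\lse(\xi,\modelspace)>0$, obtaining
\[
\Eff_{\lse}(\xi,\robdes;\modelspace) \;\ge\; 1 + \frac{\min_{\vx\in\Omega}\phi(\vx,\xi)}{\se(\xi,\modelspace)\,\lse(\xi,\modelspace)}.
\]
To reach the stated form, I need $1/\lse(\xi,\modelspace) \le 2$. Since $\Phi_p^j(\xi) \ge \Phi_p^{\opt_j}$ for every $j$, each summand in $\se(\xi,\modelspace)$ is at least $e$, so $\se(\xi,\modelspace) \ge m e$ and $\lse(\xi,\modelspace) \ge 1+\ln m \ge 1 \ge 1/2$. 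Combined with $\min_{\vx\in\Omega}\phi(\vx,\xi) \le 0$ (also from Lemma~\ref{lem:ineqofdirder}), this yields
\[
\frac{\min_{\vx\in\Omega}\phi(\vx,\xi)}{\se(\xi,\modelspace)\,\lse(\xi,\modelspace)} \;\ge\; \frac{2\min_{\vx\in\Omega}\phi(\vx,\xi)}{\se(\xi,\modelspace)},
\]
and the claimed bound follows.

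The main obstacle I anticipate is the convexity of $\lse$ in the first step: convexity of $\se$ established in Lemma~\ref{lem:SeConvex} does not transfer through $\ln$ (which is concave), so I cannot simply cite that lemma. Instead I must exploit the log-sum-exp composition structure together with the underlying convexity of each $\Phi_p^j(\cdot)$. Once that is in hand, the chain-rule computation and the elementary fact $\lse(\xi,\modelspace)\ge 1/2$ close the argument with no further difficulty.
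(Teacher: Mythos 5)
Your proof is correct, and it takes a genuinely different route from the paper's. The paper never invokes convexity of $\lse$ itself; instead it sets $a=\se(\robdes,\modelspace)/\se(\xi,\modelspace)$, uses Lemma~\ref{lem:ineqofdirder} to get $1\ge a\ge 1+\min_{\vx\in\Omega}\phi(\vx,\xi)/\se(\xi,\modelspace)\ge 1/2$ (after disposing of the trivial case where the claimed lower bound is negative), and then controls the gap between the log-ratio $\Eff_{\lse}(\xi,\robdes;\modelspace)$ and the plain ratio $a$ via $\left|\ln(a)/\ln(\se(\xi,\modelspace))+1-a\right|\le 1-a$, using $\se(\xi,\modelspace)\ge e$; this gives $\Eff_{\lse}\ge 2a-1$ and hence the stated bound. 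You instead establish convexity of $\lse$ directly through the log-sum-exp composition rule (correctly flagging that Lemma~\ref{lem:SeConvex} does not transfer through the concave $\ln$), apply the first-order convexity inequality to $\lse$ together with the chain-rule identity $\nabla_{\robdes}\lse(\xi,\modelspace)=\phi(\robdes,\xi)/\se(\xi,\modelspace)$, invoke only the first inequality of Lemma~\ref{lem:ineqofdirder}, and then use $\lse(\xi,\modelspace)\ge 1\ge 1/2$ to absorb the prefactor. Your route avoids the paper's case split and its somewhat ad hoc endpoint estimate, and it in fact yields the sharper intermediate bound $\Eff_{\lse}\ge 1+\min_{\vx\in\Omega}\phi(\vx,\xi)/\bigl(\se(\xi,\modelspace)\,\lse(\xi,\modelspace)\bigr)\ge 1+\min_{\vx\in\Omega}\phi(\vx,\xi)/\se(\xi,\modelspace)$, so the factor $2$ in the theorem is not even needed under your argument. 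The price is the extra (easy) convexity claim for $\lse$; the paper's argument gets by with only the convexity of $\se$ already recorded in Lemma~\ref{lem:SeConvex}.
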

\begin{comment}
\begin{proof}
When $1+2\frac{\min\limits_{\vx\in\Omega} \phi(\vx,\xi)}{\se(\xi,\modelspace)}<0$, $\Eff_{\lse}(\xi,\robdes;\modelspace) \geq  1+2\frac{\min\limits_{\vx\in\Omega} \phi(\vx,\xi)}{\se(\xi,\modelspace)}$ holds automatically.

When $1+2\frac{\min\limits_{\vx\in\Omega} \phi(\vx,\xi)}{\se(\xi,\modelspace)}\geq 0$, that is, $\frac{\min\limits_{\vx\in\Omega} \phi(\vx,\xi)}{\se(\xi,\modelspace)}\geq -0.5$, define $\frac{\se(\robdes,\modelspace)}{\se(\xi,\modelspace)} = a >0$, then it follows immediately from Lemma \ref{lem:ineqofdirder} that
\begin{equation}\label{inequ:eff}
1\geq a = \frac{\se(\robdes,\modelspace)}{\se(\xi,\modelspace)}\geq 1+\frac{\min\limits_{\vx\in\Omega} \phi(\vx,\xi)}{\se(\xi,\modelspace)}\geq 0.5.
\end{equation}

Since the function $\frac{\ln(a)}{\ln(\se(\xi,\modelspace))}+1-a$ is an increasing function of $\se(\xi,\modelspace)$, $\se(\xi,\modelspace)\geq e$ because of its definition and $a\geq 0.5$, we have
\begin{eqnarray*}
&&\left|\Eff_{\lse}(\xi,\robdes;\modelspace)-\frac{\se(\robdes,\modelspace)}{\se(\xi,\modelspace)}\right|= \left|\frac{\ln(a\se(\xi,\modelspace))}{\ln(\se(\xi,\modelspace))}-\frac{a\se(\xi,\modelspace)}{\se(\xi,\modelspace)}\right|\\
&=& \left|\frac{\ln(a)}{\ln(\se(\xi,\modelspace))}+1-a\right|\leq \max(\left|\ln(a)+1-a\right|,\left|1-a\right|)\\
&=& \max(-\ln(a)-1+a,1-a)=1-a.
\end{eqnarray*}
Thus, together with \eqref{inequ:eff}, $\Eff_{\lse}(\xi,\robdes;\modelspace)\geq 2 a-1\geq 1+2\frac{\min\limits_{\vx\in\Omega} \phi(\vx,\xi)}{\se(\xi,\modelspace)}$.
\end{proof}
\end{comment}

Using the lower bound of $\lse$-efficiency in Theorem \ref{thm:lowerboundeff} as the stopping criterion,
the proposed algorithm terminates when the lower bound  $1+2\frac{\min\limits_{\vx\in\Omega} \phi(\vx,\xi)}{\se(\xi,\modelspace)}$ exceeds a user-specified value, $Tol_{\text{eff}}$.
Here $Tol_{\text{eff}}$ should be set close to 1, say $Tol_{\text{eff}} = 0.99$, or equivalently $\frac{\min\limits_{\vx\in\Omega} \phi(\vx,\xi)}{\se(\xi,\modelspace)}\geq -0.005$.
With this stopping rule, the sequential algorithm to construct the Mm-$\Phi_p$ design is described in Algorithm \ref{alg:sequential}.
The $MaxIter_2$ is the maximum number of iterations allowed of adding design points, and we set it to be 200.

\begin{algorithm}
  \caption{ (\textbf{Mm-$\Phi_p$ Algorithm}) The Sequential Algorithm for Mm-$\Phi_p$ Design. \label{alg:sequential}}
  \begin{algorithmic}[1]
  \State For each model specification $M_j\in \modelspace$, construct the local optimal design and calculate the corresponding optimatlity criterion value $\Phi_p^{\opt_j}$.
  \State Generate an $N$ points candidate pool $\mathcal{C}$.
  \State Choose an initial design points set $\mathcal{X}^{(0)} = \left\{\vx_1,\cdots,\vx_{l+1}\right\}$ containing $l+1$ points.
  \State Obtain optimal weights $\vlambda^{(0)}$ of initial design points set $\mathcal{X}^{(0)}$ using Algorithm \ref{alg:weight} (\textbf{Optimal-Weight Procedure}) and form the initial design $\xi^{(0)} = \left\{\begin{array}{cc} \mathcal{X}^{(0)}\\ \vlambda^{(0)}
  \end{array}\right\}$.
  \State Calculate the lower bound of $\lse$-efficiency of $\xi^{(0)}$:
 \[\text{eff.low} = 1+2\frac{\min\limits_{\vx\in\mathcal{C}} \phi(\vx,\xi^{(0)})}{\se(\xi^{(0)},\modelspace)}.\]
  \State Set $r=1$.
  \While {$\text{eff.low}<Tol_{\text{eff}}$ and $r< MaxIter_1$}
  \State
 Add the point $\vx_r^* = \argmin \limits_{\vx \in \mathcal{C}} \phi(\vx,\xi^{(r-1)})$ to the current design points set, i.e., $\mathcal{X}^{(r)} = \mathcal{X}^{(r-1)}\cup \{\vx_r^*\}$,
  where $\phi(\vx,\xi^{(r)})$ is given in Lemma \ref{lem:dirder2}.
\State Obtain optimal weights $\vlambda^{(r)}$ of the current design points set $\mathcal{X}^{(r)}$ using Algorithm \ref{alg:weight} (\textbf{Optimal-Weight Procedure}) and form the current design $\xi^{(r)} = \left\{\begin{array}{cc} \mathcal{X}^{(r)}\\ \vlambda^{(r)}
  \end{array}\right\}$.
  \State Calculate the lower bound of $\lse$-efficiency of $\xi^{(r)}$,
\[\text{eff.low} = 1+2\frac{\min\limits_{\vx\in\mathcal{C}} \phi(\vx,\xi^{(r)})}{\se(\xi^{(r)},\modelspace)}.\]
\State $r=r+1$.
 \EndWhile
\end{algorithmic}
\end{algorithm}

In Section \ref{sec: convergence}, we provide some theoretical properties on the convergence of the Mm-$\Phi_p$ Algorithm.
Note that the \textbf{Mm-$\Phi_p$ Algorithm} requires optimizing the weights $\vlambda^{(r)}$ of the current design points in each sequential iteration.
Section \ref{sec: weight updating} describes the procedure on how to optimize the weight given the design points.

\subsection{Convergence of the Mm-$\Phi_p$ Algorithm}\label{sec: convergence}

The sequential nature of the proposed \textbf{Mm-$\Phi_p$ Algorithm} (i.e., Algorithm \ref{alg:sequential}) makes it efficient in computation as it adds one design point in each iteration.
Moreover, we can establish the theoretical convergence of Algorithm \ref{alg:sequential}, which is stated as follows.

\begin{thm}[Convergence of Algorithm \ref{alg:sequential}(Mm-$\Phi_p$ Algorithm)]\label{thm:cong-algo2}
Assume the candidate pool $\mathcal{C}$ contains all the support points of the Mm-$\Phi_p$ design $\robdes$.
The design constructed by Algorithm \ref{alg:sequential} converges to $\robdes$ that minimizes $\lse(\xi,\modelspace)$, i.e.,
\[\lim\limits_{r\rightarrow\infty} \lse(\xi^{(r)},\modelspace) = \lse(\robdes,\modelspace).\]
\end{thm}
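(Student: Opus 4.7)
The plan is to establish $\lim_{r\to\infty}\lse(\xi^{(r)},\modelspace) = \lse(\robdes,\modelspace)$ by combining monotone convergence of the criterion values, compactness at a cluster point of the design sequence, and invocation of the General Equivalence Theorem at that limit.

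First I would show that $L_r := \lse(\xi^{(r)},\modelspace)$ is non-increasing and bounded below. Monotonicity is immediate from the algorithmic construction: the Optimal-Weight Procedure produces weights that minimize $\se$ over the enlarged support $\mathcal{X}^{(r+1)} = \mathcal{X}^{(r)} \cup \{\vx_r^{*}\}$, and since $\xi^{(r)}$ is itself a probability measure on $\mathcal{X}^{(r+1)}$, we have $\se(\xi^{(r+1)}) \leq \se(\xi^{(r)})$. The bound $L_r \geq \lse(\robdes,\modelspace)$ is trivial from optimality of $\robdes$, so $L_r$ decreases monotonically to some limit $L^{*} \geq \lse(\robdes,\modelspace)$.

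Next, because every $\xi^{(r)}$ is supported on the finite set $\mathcal{X}^{(0)} \cup \mathcal{C}$, its weight vector lies in a compact Euclidean simplex. I would extract a convergent subsequence $\xi^{(r_k)} \to \xi^{\infty}$ and, using finiteness of $\mathcal{C}$, pass to a further sub-subsequence on which $\vx^{*}_{r_k} = \vx^{\dagger}$ is constant. Continuity of $\phi(\vx,\cdot)$ in the weights lets us take limits in $\vx^{*}_{r_k} = \argmin_{\vx\in\mathcal{C}}\phi(\vx,\xi^{(r_k)})$ to obtain $\phi(\vx^{\dagger},\xi^{\infty}) = \min_{\vx\in\mathcal{C}}\phi(\vx,\xi^{\infty})$. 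To upgrade this to the optimality condition at $\xi^{\infty}$, use that the updated weights are optimal: for every $\alpha \in [0,1]$,
\begin{equation*}
\se(\xi^{(r_k+1)}) \leq \se\bigl((1-\alpha)\xi^{(r_k)} + \alpha \delta_{\vx^{\dagger}}\bigr),
\end{equation*}
where $\delta_{\vx^{\dagger}}$ denotes the one-point design at $\vx^{\dagger}$. As $k\to\infty$ the left side tends to $e^{L^{*}} = \se(\xi^{\infty})$ (since the whole sequence $L_r$ converges to $L^{*}$), while continuity of $\se$ sends the right side to $\se((1-\alpha)\xi^{\infty} + \alpha \delta_{\vx^{\dagger}})$. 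Taking the right derivative at $\alpha = 0^{+}$ yields $\phi(\vx^{\dagger},\xi^{\infty}) \geq 0$, hence $\phi(\vx,\xi^{\infty}) \geq 0$ for every $\vx \in \mathcal{C}$. The General Equivalence Theorem (Theorem~\ref{thm:equi_thm}), applied with the experimental region taken to be $\mathcal{C}$, then implies that $\xi^{\infty}$ minimizes $\se$ among designs on $\mathcal{C}$. Since $\mathcal{C}$ contains all support points of $\robdes$ by hypothesis, $\robdes$ competes in this restricted minimization, forcing $\se(\xi^{\infty}) = \se(\robdes)$ and therefore $L^{*} = \lse(\robdes,\modelspace)$.

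The main obstacle I anticipate is controlling continuity of $\se(\cdot)$ and $\phi(\vx,\cdot)$ along the iterates, which reduces to keeping each $\mI_j(\xi^{(r)})$ uniformly bounded away from singular matrices so that $\Phi_p^j$ stays smooth and finite. This should follow because the initial design has $l+1$ points with positive optimized weights, after which the monotone decrease of $\se$ traps the iterates in a sublevel set of $\se$ on which all information matrices must be nonsingular (otherwise $\se$ would be infinite); still, a careful write-up must confirm that a cluster point $\xi^{\infty}$ inherits this nonsingularity, and that the limit interchange $\se(\xi^{(r_k)}) \to \se(\xi^{\infty})$ is valid. A minor secondary point is verifying that Theorem~\ref{thm:equi_thm} goes through verbatim when $\Omega$ is replaced by the finite set $\mathcal{C}$, which should be transparent from its proof.
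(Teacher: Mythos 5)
Your proof is correct in outline, but it takes a genuinely different route from the paper's. The paper argues by contradiction with a quantitative sufficient-decrease argument: if $\se(\xi^{(r)},\modelspace)$ stayed at least $a>0$ above the optimum, then Lemma \ref{lem:ineqofdirder} would force $\phi(\vx_r^*,\xi^{(r)})<-a$ at every iteration, and a second-order Taylor expansion of $\alpha\mapsto\se((1-\alpha)\xi^{(r)}+\alpha\vx_r^*,\modelspace)$ combined with the optimality of the updated weights would yield a per-iteration decrease bounded away from zero, driving $\se$ to $-\infty$ and contradicting $\se\geq 0$. You instead use monotonicity plus compactness of the simplex over the finite set $\mathcal{X}^{(0)}\cup\mathcal{C}$ to extract a cluster point $\xi^{\infty}$, show all directional derivatives are nonnegative there, and invoke Theorem \ref{thm:equi_thm}; you never need Lemma \ref{lem:ineqofdirder}, which is the paper's key ingredient. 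Each approach has trade-offs. The paper's argument does not require finiteness of the candidate pool and avoids subsequence extraction, but it silently treats the second-order term $u$ as uniformly bounded over iterations (otherwise the decrement $a^2/(2u)$ could vanish), a point it does not justify. Your argument sidesteps second-order information entirely, but leans on the finiteness of $\mathcal{C}$ (which matches the implemented algorithm) and on the continuity/nonsingularity issues at the cluster point that you correctly flag; the lower-semicontinuity argument you sketch (a singular $\mI_j(\xi^{\infty})$ would force $\se(\xi^{\infty})=+\infty$, contradicting $\se(\xi^{\infty})=e^{L^*}\leq\se(\xi^{(0)},\modelspace)<\infty$) does close that gap. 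Note also that only the inequality part of condition 2 in Theorem \ref{thm:equi_thm} is needed for the implication $(2)\rightarrow(1)$, exactly as in the paper's own proof of that theorem, so your omission of the equality-at-support-points clause is harmless. Both proofs share the same unproven premise that the Optimal-Weight Procedure returns exactly optimal weights on each support set.
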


Besides its theoretically guaranteed convergence property, Algorithm \ref{alg:sequential} also converges fast with no more than 50 iterations in all the numerical examples, although the maximal number of iteration is set to be 200. More details about the speed of convergence and computational time are reported in Section \ref{sec:examples}.

We would like to remark that, at the beginning of Algorithm \ref{alg:sequential},  the local optimal design and the corresponding optimality criterion value $\Phi_p^{\opt_j}$ need to be calculated for each model specification $M_j \in \modelspace$.
It is because they are involved in $\se(\xi, \modelspace)$ and all its derivatives.
However, we only need to compute them once. Using the algorithm proposed by \cite{li2018efficient}, we can construct local $\Phi_p$-optimal designs for GLMs efficiently with guaranteed convergence.

\subsection{An Optimal-Weight Procedure Given Design Points}\label{sec: weight updating}
Based on Corollary \ref{thm:equi_weight}, with a given set of design points $\vx_1,\cdots,\vx_n$, a sufficient condition that $\vlambda^*$ minimizes $\se(\xi^{\vlambda},\modelspace)$ is:
\[
\phi(\vx_i,\xi^{\vlambda^*}) = 0, \mbox{ for } i=1, \ldots, n,
\]
or equivalently (based on Lemma \ref{lem:dirder2}),
\begin{equation}\label{eqn:suffweight1}
\small
\left\{\begin{array}{rll}
q\sum\limits_{j=1}^m\tphi_0^j(\xi^{\vlambda^*}) &= \sum\limits_{j=1}^m \tphi_0^j(\xi^{\vlambda^*}) w_j(\vx_i)\vg_j^{\top}(\vx_i)\mM_j(\xi^{\vlambda^*})\vg_j(\vx_i), & p=0;\\
q^{1/p}\sum\limits_{j=1}^m\tphi_p^j(\xi^{\vlambda^*})\Phi_p^j(\xi^{\vlambda^*}) &= \sum\limits_{j=1}^m\tphi_p^j(\xi^{\vlambda^*})w_j(\vx_i)\left(\tr\left(\mF_j(\xi^{\vlambda^*})\right)^p\right)^{1/p-1}\vg_j^{\top}(\vx_i) \mM_j(\xi^{\vlambda^*})\vg_j(\vx_i),& p>0.
\end{array}\right.
\end{equation}
where
$\tphi^j_p(\xi) = \left[\Phi_p^{\opt_j}\right]^{-1}\exp\left(\frac{\Phi_p^j(\xi)}{\Phi_p^{\opt_j}}\right)$ and $\mM_j(\xi) = \mI_j(\xi)^{-1}\mB_j^{\top}\mF_j(\xi)^{p-1}\mB_j\mI_j(\xi)^{-1}$ with $\mB_j = \left.\frac{\partial \vf(\vbeta)}{\partial \vbeta^{\top}}\right|_{\vbeta = \vbeta_j}$ and $\mF_j(\xi) = \mB_j\mI_j(\xi)^{-1}\mB_j^{\top}$.
%for all design points $i=1,\ldots,n$.
For convenience, we denote the right side of \eqref{eqn:suffweight1} as $d_p(\vx_i,\xi^{\vlambda^*})$.
For \emph{any} weight vector $\vlambda = [\lambda_1,\ldots,\lambda_n]^\top$, with simple linear algebra, it is easy to obtain
\begin{equation}\label{eqn:suffweight2}
\left\{\begin{array}{rlll}
q\sum\limits_{j=1}^m\tphi_0^j(\xi^{\vlambda})
 &=& \sum\limits_{i=1}^n\lambda_id_0(\vx_i,\xi^{\vlambda}), & p=0;\\
q^{1/p}\sum\limits_{j=1}^m\tphi_p^j(\xi^{\vlambda})\Phi_p^j(\xi^{\vlambda}) &=& \sum\limits_{i=1}^n \lambda_i d_p(\vx_i,\xi^{\vlambda}), & p>0.
\end{array}\right.
\end{equation}
Combining \eqref{eqn:suffweight1} and \eqref{eqn:suffweight2}, the sufficient condition of the optimal weights is equivalent to
\begin{equation}\label{eqn:suffweight3}
\sum\limits_{s=1}^n\lambda_s^*d_p(\vx_s,\xi^{\vlambda^*}) =  d_p(\vx_i,\xi^{\vlambda^*}), \,\,\,\, p\geq 0,
\end{equation}
for all design points $\vx_1,\cdots,\vx_n$. To obtain optimal weight $\vlambda^*$ that minimizes $\se(\xi^{\vlambda},\modelspace)$, the current weights of the design points could be adjusted according to the two sides of \eqref{eqn:suffweight3}.
For a design point $\vx_i$, if $d_p(\vx_i,\xi^{\vlambda})>\sum\limits_{s=1}^n\lambda_sd_p(\vx_s,\xi^{\vlambda})$, then the weight of point $\vx_i$ should be increased based on \eqref{eqn:suffweight3}. On the contrary, if $d_p(\vx_i,\xi^{\vlambda})<\sum\limits_{s=1}^n\lambda_sd_p(\vx_s,\xi^{\vlambda})$, the weight of point $\vx_i$ should be decreased based on \eqref{eqn:suffweight3}.
Thus, following the similar idea in classic multiplicative algorithms \citep{78silvey, 10yu},
the ratio $\left(d_p(\vx_i,\xi^{\vlambda})\left/\sum\limits_{s=1}^n\lambda_sd_p(\vx_s,\xi^{\vlambda})\right.\right)^{\delta}$ would be a good adjustment for the weight of design point $\vx_i$.
Since this weight updating scheme is inspired by the classic multiplicative algorithm, we call it a modified multiplicative procedure and describe it in Algorithm \ref{alg:weight} in Appendix.

We should remark that \cite{10yu} proved the convergence of classical multiplicative algorithm \citep{78silvey} to construct local optimal design for a class of optimality $\tr(\mI(\xi^{\vlambda}, M)^p), p<0$, and \cite{li2018efficient} extended the results to a more general class of $\Phi_p$-optimality.
However, the proof in \cite{10yu} can not be easily extended to prove the convergence of Algorithm \ref{alg:weight} since the derivative of $\se(\xi^{\vlambda}, \modelspace)$ to $\lambda_i$ cannot be reformulated into the general form in Equation (2) in \cite{10yu} where only one model is involved.
Nevertheless, Lemma \ref{lem:ConvWeights} has shown that the optimization problem solved by Algorithm \ref{alg:weight} is a convex optimization
\begin{equation}\label{eqn:optprob}
\begin{array}{rrclcl}
\displaystyle \min_{\vlambda} & \multicolumn{3}{l}{\se(\xi^{\vlambda},\modelspace) = \sum\limits_{j=1}^m \exp\left(\frac{\Phi_p^j(\xi^{\vlambda})}{\Phi_p^{\opt_j}}\right)} \\
\textrm{s.t.} & \mathbf{1}^{\top}\vlambda = 1, \ \vlambda \geq  \mathbf{0}
%& \vlambda & \geq & \mathbf{0} & & \\
\end{array}
\end{equation}
with linear constraints.
Some existing optimization tools is available to solve such convex optimization.
Based on on our empirical study, Algorithm \ref{alg:weight} can converge to a solution as good as those from the commonly-used optimization tools, but has much faster computational speed.

%, which is confirmed by our numerical examples.
 To show the strength of the proposed Algorithm \ref{alg:weight}, we use a small and simple example with three design points and two $\beta$ values. In the following \emph{Example 1}, we compare Algorithm \ref{alg:weight} with two existing convex optimization tools, \verb|fmincon| function in \textsc{Matlab} using interior-point method and the \verb|CVX| toolbox in \textsc{Matlab} for convex optimization.
To solve an optimization problem with the exponential objective function, \verb|CVX| constructed a successive approximation heuristic that approximates the local exponential function with polynomial approximation and solves the approximate model using symmetric primal/dual solvers \citep{grant2009cvx}.

%Nevertheless, since Algorithm \ref{alg:weight} shares an updating formula similar to the one in the classic multiplicative algorithm, we conjecture that Algorithm \ref{alg:weight} also converges, which is confirmed by our numerical examples.
%The theoretical convergence of Algorithm \ref{alg:weight} is a challenging problem and may be studied in future work.
%As shown , the optimization problem:

\emph{Example 1.} Consider a univariate logistic regression model with the experimental domain $\Omega = [-1,1]$, basis function $\vg = [1,x]^{\top}$ and a parameter space $\mathcal{B} = \{\vbeta_1, \vbeta_2\}$ consisting of only two possible regression coefficients $\vbeta_1 = [-1.4,2.3]^{\top}$ and $\vbeta_2 = [0.5,1.2]^{\top}$.
The model space is $\mathcal{M} = \{M_1 = (h,\vg,\vbeta_1),M_2 = (h,\vg,\vbeta_2)\}$, where $h$ is the link function of logistic regression.
Given design points $x \in \{-1,0,1\}$, all three optimization methods return the same optimal weights,
\[\vlambda^* = \{0.3832,0.2660,0.3508\}.\]
Table \ref{tab:comptime} reports the computational times of the three comparison methods.
The results clearly show that Algorithm \ref{alg:weight} is far more efficient than both \verb|CVX| and \verb|fmincon|.
Furthermore, Algorithm \ref{alg:weight} boosts the speed of sequential Algorithm \ref{alg:sequential} dramatically as finding the optimal weights is done in every iteration of the sequential algorithm.

% Table generated by Excel2LaTeX from sheet 'Sheet1'
\begin{table}[ht]
\centering
\caption{Computational Times (in seconds) of Three Optimization Methods. \label{tab:comptime}}
\begin{tabular}{|c|c|c|}\hline
CVX   & fmincon & Algorithm \ref{alg:weight} (Optimal-Weight Procedure)\\\hline
4.04  & 1.44  & 0.17 \\\hline
\end{tabular}%
\end{table}

It is worth pointing out that, occasionally, $\tphi_p^j(\xi^{(r)}) = \left[\Phi_p^{\opt_j}\right]^{-1}\exp\left(\frac{\Phi_p^j(\xi^{(r)})}{\Phi_p^{\opt_j}}\right)$ and $\tphi_p^j(\xi^{\vlambda^{(k)}}) = \left[\Phi_p^{\opt_j}\right]^{-1}\exp\left(\frac{\Phi_p^j(\xi^{\vlambda^{(k)}})}{\Phi_p^{\opt_j}}\right)$ in \eqref{for:multialg} of Algorithm \ref{alg:weight} and directional derivative $\phi(\vx,\xi^{(r)})$ in Algorithm \ref{alg:sequential} can get extreme large and cause overflow, which is a well-recognized issue with the Log-Sum-Exp approximation in the literature.
One remedy is to introduce a constant $c$, and $\exp\left(\frac{\Phi_p^j(\xi^{(r)})}{\Phi_p^{\opt_j}}\right) = 、\exp(c)\exp\left(\frac{\Phi_p^j(\xi^{(r)})}{\Phi_p^{\opt_j}}-c\right)$.
This constant scaling factor $\exp(c)$ is eventually canceled in \eqref{for:multialg} in Algorithm \ref{alg:weight} and does not affect the search for the next design point Algorithm \eqref{alg:sequential}.
We set $c = \left\lceil\max\limits_j\left(\frac{\Phi_p^j(\xi^{\vlambda^{(k)}})}{\Phi_p^{\opt_j}}\right)-500\right\rceil$ in Algorithm \ref{alg:weight} and $c = \left\lceil\max\limits_j\left(\frac{\Phi_p^j(\xi^{(r)})}{\Phi_p^{\opt_j}}\right)-500\right\rceil$ in Algorithm \ref{alg:sequential} whenever overflow occurs.

\section{Numerical Examples}\label{sec:examples}

In this section, we conduct several numerical examples to evaluate the performance of the proposed Mm-$\Phi_p$ design under different types of model uncertainty.
The performance of the proposed Mm-$\Phi_p$ design is compared to the compromise design proposed by \cite{woods2006designs}.
As we have clarified in Section \ref{subsec:robvscom}, there are two types of compromise design.
The eff-compromise design $\effcomdes$ aims at maximizing the average $\Phi_p$-efficiency and the $\Phi_p$-compromise design $\phicomdes$ aims at minimizing the average $\Phi_p$-optimality criterion.
The later one coincides with the Bayesian optimal design when only considering the uncertainty from unknown regression coefficients.
%The Algorithm \ref{alg:sequential} can be easily modified to construct both types of compromise designs.
For all the designs in the examples, the candidate pool $\mathcal{C}$ is constructed by grid points and each dimension of $\vx$ has 51 equally spaced grid points.
We use the default uniform prior distribution on the model specification for the compromise designs. For $\vf(\vbeta) = [f_1(\vbeta),...,f_q(\vbeta)]^{\top}$ in $\Phi_p(\xi,M)$ in \eqref{eq:phi_p}, we set $f_{j}(\vbeta) = \beta_{j}$.

\subsection{Model Uncertainty}
In the following \emph{Example 2}, we investigate the performance of the Mm-$\Phi_p$ design and algorithm when the uncertainties are involved in both the link functions and basis functions in the model space $\mathcal{M}$.
%\ny{In this examples, we would investigate the performance of the proposed design and algorithm over a variety of model spaces $\mathcal{M}$ with model uncertainty}.
%\ny{The referee commented: ``p.22, l.21-32. If this is for the simulations it could go later. Otherwise M is defined in a fuzzy way." What does this mean?}

\emph{Example 2.} For an experiment with $d=2$ input variables and one binary response, consider both logistic regression model and probit model, and possible polynomial basis functions up to degree 2, i.e.,
$$\mathcal{G} = \left\{\vg_1 = (1,x_1,x_2)^{\top},
\vg_2 = (1, x_1,x_2,x_1x_2)^{\top}, \vg_3 = (1, x_1,x_2,x_1x_2,x_1^2,x_2^2)^{\top} \right\}.$$
For the basis $\vg_3$, the regression coefficients $\vbeta_3 = [\beta_{3,1},\cdots,\beta_{3,6}]^{\top}$ are drawn randomly from standard multivariate normal distribution.
For the basis $\vg_2$, the the regression coefficients $\vbeta_2 = [\beta_{2,1},\cdots, \beta_{2,4}]^{\top}$ are drawn independently with $\beta_{2,j} \sim \text{N}(\beta_{3,j}, (0.5\beta_{3,j})^2)$, for $j = 1,2,3,4$.
The variance $(0.5\beta_{3,j})^2$ that depends on the regression coefficient $\beta_{3,j}$ allows a larger perturbation for $\beta_{2,j}$ when the corresponding $\beta_{3,j}$ is large.
It is to accommodate the fact that the values of the regression coefficients are likely to change when the quadratic terms are removed.
For the basis $\vg_1$, the regression coefficients $\vbeta_1 = [\beta_{1,1},\beta_{1,2},\beta_{1,3}]^{\top}$ are drawn independently with $\beta_{1,i} \sim \text{N}(\beta_{3,i}, (0.5\beta_{3,i})^2)$, for $i=1,2,3$. Thus, the model space $\mathcal{M}$ consists of six models: $\mathcal{M} = \{M_1 = (\text{probit}, \vg_1,\vbeta_1), M_2 = (\text{probit}, \vg_2,\vbeta_2), M_3 = (\text{probit}, \vg_3,\vbeta_3), M_4 = (\text{logit}, \vg_1,\vbeta_1), M_5 = (\text{logit}, \vg_2,\vbeta_2), M_6 = (\text{logit}, \vg_3,\vbeta_3)\}$.
We generate 100 parameter sets $\mathcal{B} = \{\vbeta_1, \vbeta_2, \vbeta_3\}$ to form 100 model sets.
For each generated model set, the Mm-$\Phi_p$ design, eff-compromise design, and $\Phi_p$-compromise design are constructed, respectively.

To compare the designs, we use the $\Phi_p$-efficiency defined in \eqref{eqn:phieff} as a larger-the-better performance measure.
In particular, we consider $\Phi_0(\xi,M)$ (i.e., $\lim\limits_{p\rightarrow 0}\Phi_p(\xi,M)$) which is the D-optimality and $\Phi_1(\xi,M)$ which is the A-optimality.
For each model space, we compute the $\Phi_p$-efficiency in \eqref{eqn:phieff} of all three designs relative to the corresponding local optimal design, and the local optimal design $\optdesM$ is obtained by the algorithm of \cite{li2018efficient}.
For each model space, we can calculate the worse-case efficiency as $\min\limits_{M_i\in\mathcal{M}}\eff_{\Phi_p}(\xi, \xi^{\opt}_{M_i}; M_i)$.

\begin{figure}[hbtp]
\centering
\subfloat[A-optimality]
{{\includegraphics[width=6cm]{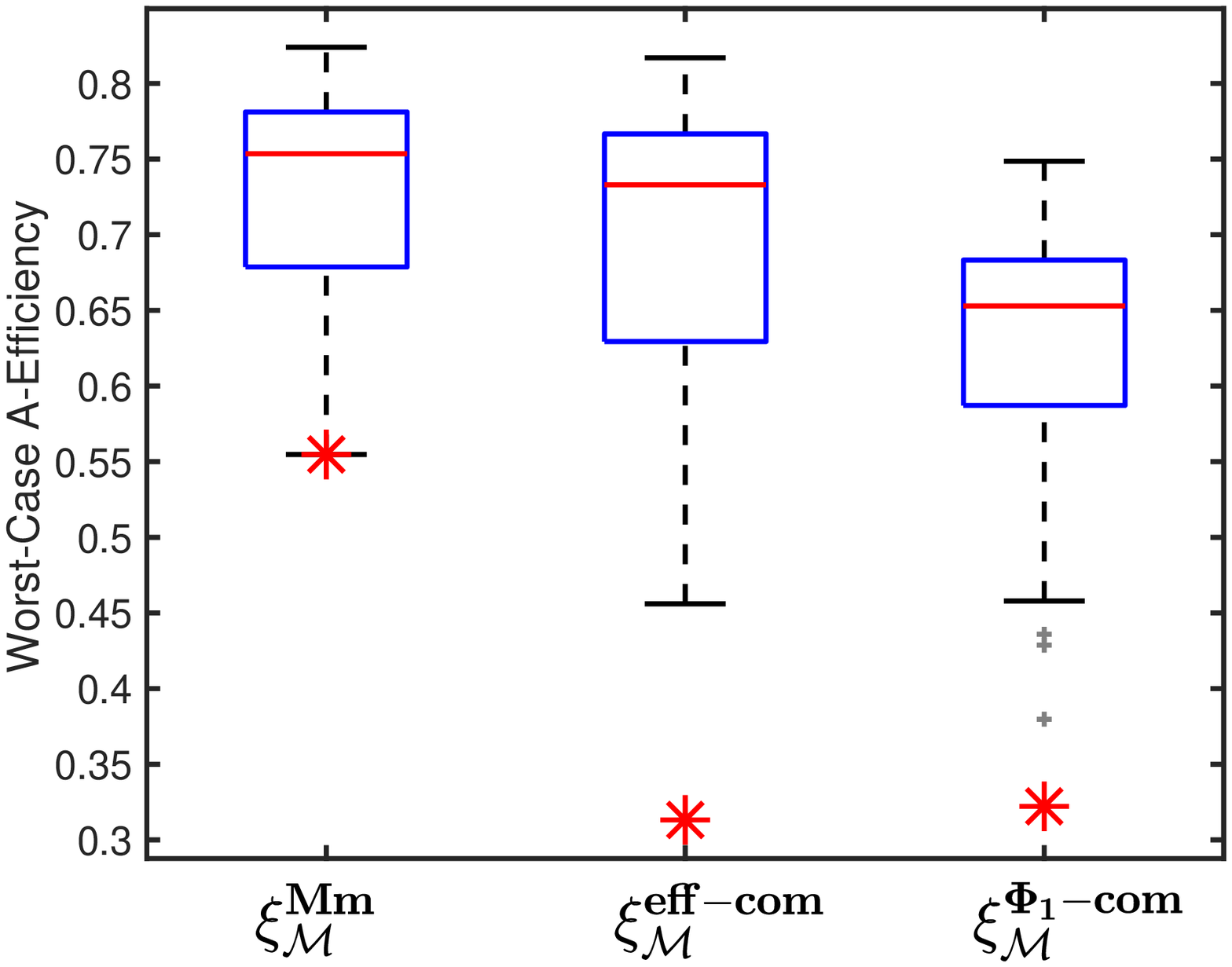}}}
\qquad
\subfloat[D-optimality]
{{\includegraphics[width=6cm]{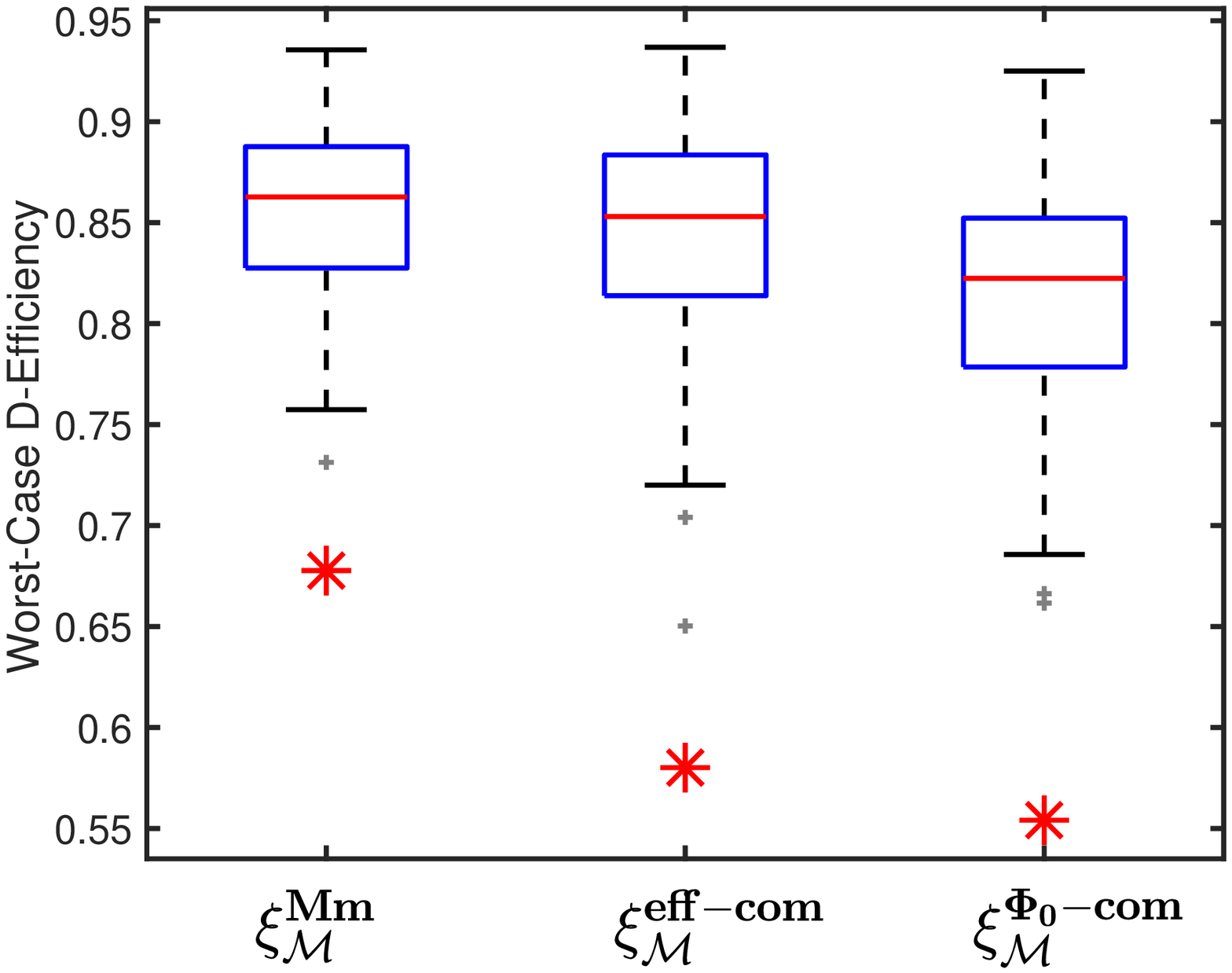}}}
\caption{Boxplot of Worse-Case A- and D-Efficiency of Mm-$\Phi_p$ Design, Eff-Compromise Design, and $\Phi_p$-Compromise Design across 100 Randomly Generated Model Spaces}
\label{fig:2dProLogBoxPlot}
\end{figure}

\begin{table}[htbp]
  \centering
  \caption{Minimum and Median of the Worst-Case A- and D-Efficiency across 100 Randomly Generated Model Spaces for Comparison of Designs}
   \begin{tabular}{@{\extracolsep{4pt}}|lcccc|@{}}
    \hline
    & \multicolumn{2}{c}{Worst-Case A-Efficiency} & \multicolumn{2}{c}{Worst-Case D-Efficiency}\vline\\
 \cline{2-3}  \cline{4-5}
          &   $\min$    & $\median$  & $\min$ & $\median$ \\
    \hline
    Mm-$\Phi_p$ Design & 0.55 & 0.75 &0.68 & 0.86\\
    Eff-Compromise Design & 0.31  & 0.73 &0.58 & 0.85\\
    $\Phi_p$-Compromise Design & 0.32  & 0.65 &0.55 & 0.82\\
    \hline
    \end{tabular}%
  \label{tab:2dProLogADEff}%
\end{table}

Figure \ref{fig:2dProLogBoxPlot} shows the boxplot of the worst-case A- and D-efficiency of the Mm-$\Phi_p$ design, eff-compromise design and $\Phi_p$-compromise design across 100 different model sets. The red asterisks ``$*$" in the boxplot denote the minimum worst-case A- and D-efficiency, and the larger the minimum, the better the design. Table \ref{tab:2dProLogADEff} summarizes the minimum and median of the worst-case A- and D-efficiency of the three designs.
The results show that the Mm-$\Phi_p$ design returns the largest values on the minimum and median of the worst-case efficiency.
We also notice that the eff-compromise design often gives the highest mean efficiency for a given model space, which is expected since it is designated to achieve the maximum mean efficiency.
However, the mean A- and D-efficiency of all three designs are comparable on average over the 100 model sets.
The computational times of Algorithm 1 to construct the Mm-$\Phi_p$ design are about 7.59 seconds and 6.18 seconds for A- and D-optimality, respectively.

\subsection{Uncertain Regression Coefficients}
In the following \emph{Example 3}, we further illustrate the advantages of Mm-$\Phi_p$ design through an example considering the uncertain regression coefficients with the specified link function $h$ and basis functions $\vg$.
Note that when the regression coefficient space $\mathcal{B}$ is continuous, a discretization is needed.
In \emph{Example 3}, the performance of the proposed design and algorithm over the unsampled values of regression coefficient $\vbeta$ is investigated.

\emph{Example 3}.  For a univariate logistic regression model with experimental domain $\Omega = [-1,1]$ and  a quadratic basis, i.e. $\vg (x) = [1,x,x^2]^{\top}$, consider a regression coefficient space $\mathcal{B} = \{\beta_1\in[0,6],\beta_2\in[-6,0],\beta_3\in[5,11]\}$.
Since $\mathcal{B}$ is continuous, we choose a Sobol sample of size twenty-six and the centroid $\vbeta_c = [3,-3,8]^{\top}$ of $\mathcal{B}$, i.e. $m=27$,  to form the surrogate coefficient set $ \mathcal{B}'$.
Sobol sample is a low discrepancy sequence that converges to a uniform distribution on a bounded set and it is widely used in Monte Carlo methods \citep{sobol1967distribution}.
The surrogate model set is $\mathcal{M}' = \{M = (h,\vg,\vbeta): h, \vg, \vbeta\in \mathcal{B}'\}$, where $h$ is the link function of the logistic regression.
Four designs are considered: (1) Mm-$\Phi_p$ design $\robdes$; (2) eff-compromise design $\effcomdes$; (3) local optimal design $\xi^{\text{center}}$ of the centroid of $\mathcal{B}$, i.e. $\vbeta_c = [3,-3,8]^{\top}$, which can be viewed as either Mm-$\Phi_p$ or compromise design with $m=1$, and (4) Bayesian optimal design $\xi^{\text{Bayesian}}_{\modelspace}$ with uniform prior, which is also the $\Phi_p$-compromise design.
Figure \ref{fig:1dLogitPoints} shows the constructed designs under D- and A-optimality, respectively.

To compare the four designs, we use the $\Phi_p$-efficiency defined in \eqref{eqn:phieff} as a performance measure.
Specifically, we generate a size of 10,000 Sobol sample from the original continuous region $\mathcal{B}$.
For each of the sample, we compute the $\Phi_p$-efficiency in \eqref{eqn:phieff} of all four designs relative to the corresponding local optimal design,
and the local optimal design $\optdesM$ is obtained in the same way as in Example 1.
Figure \ref{fig:1dLogitBoxPlot} shows the boxplot of A- and D-efficiency of $\robdes$, $\effcomdes$, $\xi^{\text{center}}$, and $\xi^{\text{Bayesian}}_{\modelspace}$ over 10,000 randomly sampled $\vbeta$ values. The red asterisks ``$*$" in the boxplot denote the worst-case A- and D-efficiency, and the larger the worst-case efficiency, the better the design. Table \ref{tab:1dLogitIDEff} summarizes the minimum and median A- and D-efficiency of the four designs.

It is seen that the Mm-$\Phi_p$ design $\robdes$ outperforms the other three designs in terms of the worst-case design efficiency, especially for A-optimality.
Specifically, the worst-case A-efficiency of the Mm-$\Phi_p$ design is 0.41, and is much larger than those of the other three designs.
The worst-case D-efficiency of the Mm-$\Phi_p$ design is 0.86 and is only slightly larger than those of other designs.
We also found that the maximum A-efficiency of the Mm-$\Phi_p$ design is the smallest,
which is not surprising considering the Mm-$\Phi_p$ design maximizes the worst-case efficiency, not the best-case efficiency.

To illustrate the computational efficiency of the proposed Algorithm \ref{alg:sequential}, Figure \ref{fig:1dLogitIterVsObg} shows how the Mm-$\Phi_p$ design criterion $\lse(\xi^{(r)},\mathcal{M}')$ decreases with respect to the number of iterations.
The computation times of Algorithm 1 to construct $\robdes$ are 1.68 and 1.47 seconds for A- and D-optimality, respectively.

\begin{figure}[hbtp]
\centering
\subfloat[A-optimality]
{{\includegraphics[width=6cm]{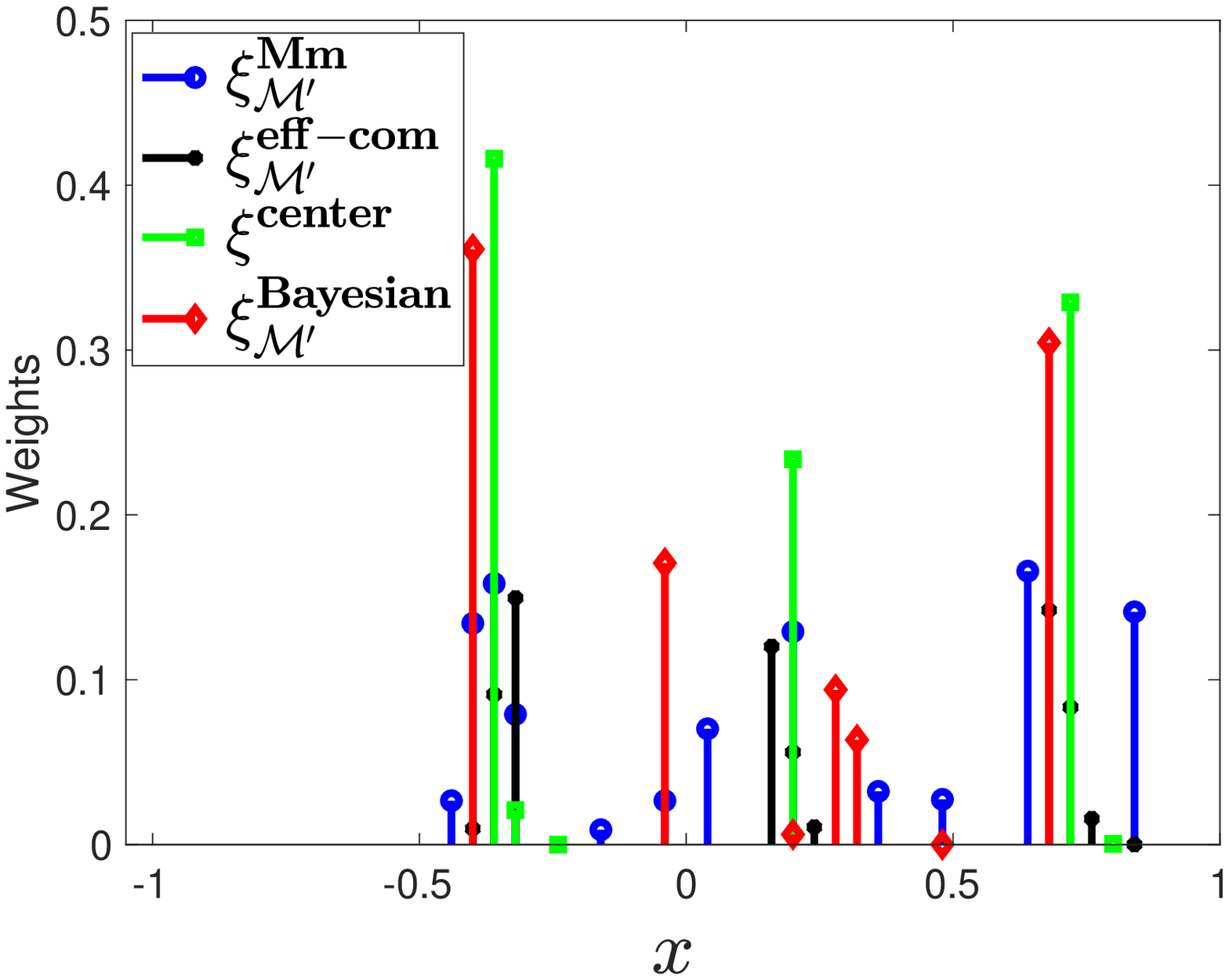}}}
\qquad
\subfloat[D-optimality]
{{\includegraphics[width=6cm]{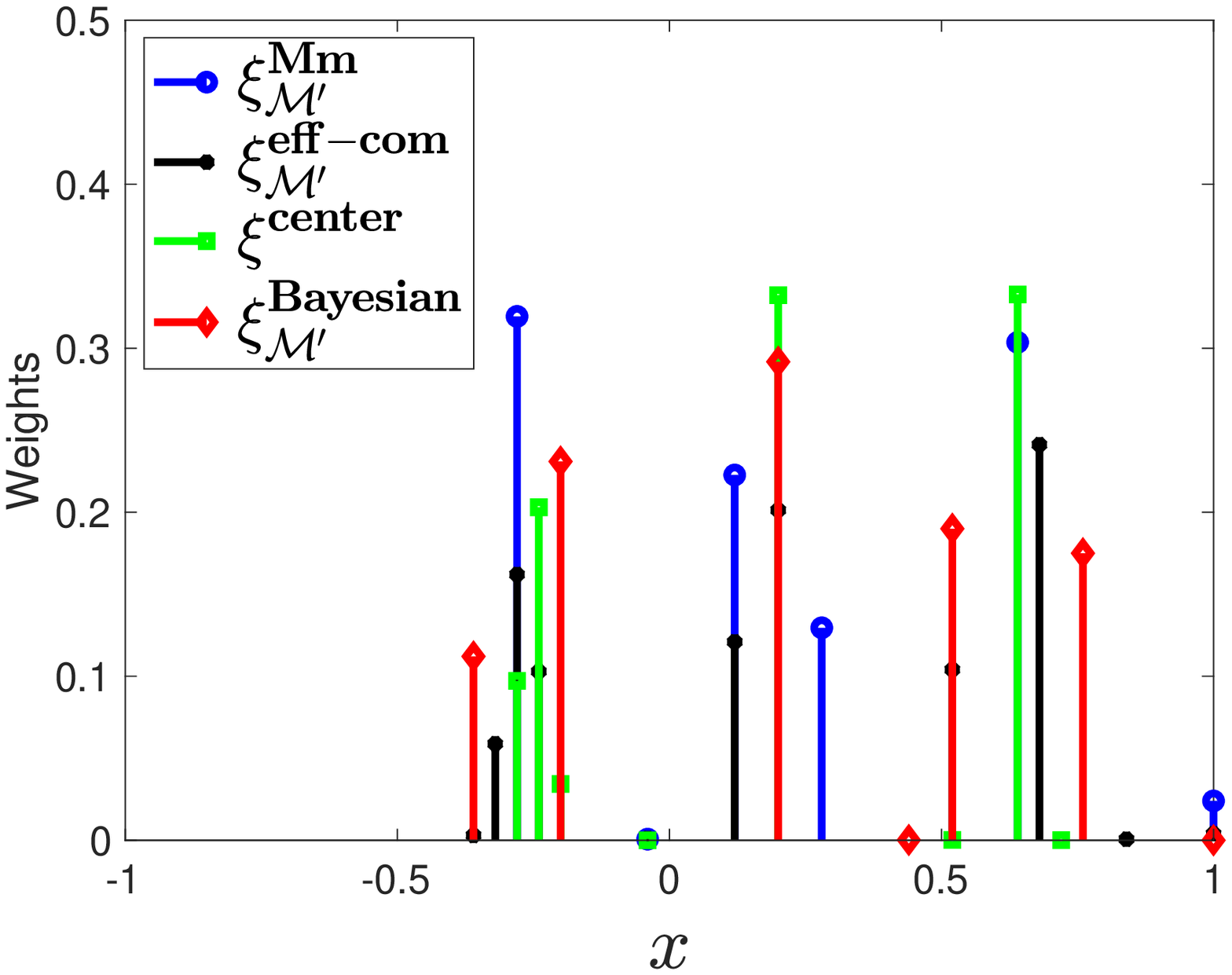}}}
\caption{Mm-$\Phi_p$ Design, Eff-Compromise Design, Centroid Optimal Design and Bayesian Optimal Design}
\label{fig:1dLogitPoints}
\end{figure}

\begin{table}[htbp]
  \centering
  \caption{Minimum and Median of A- and D- Efficiency across 10,000 Sampled $\vbeta$ for Comparison of Four Designs}
    \begin{tabular}{@{\extracolsep{4pt}}|lcccc|@{}}
    \hline
    & \multicolumn{2}{c}{A-Efficiency} & \multicolumn{2}{c}{D-Efficiency}\vline\\
 \cline{2-3}  \cline{4-5}
          &   $\min$    & $\median$  & $\min$ & $\median$ \\
    \hline
    Mm-$\Phi_p$ Design & 0.41  & 0.70 &0.86 & 0.98\\
    Eff-Compromise Design & 0.21  & 0.71 &0.83 & 0.98\\
    Centroid Optimal Design & 0.16  & 0.71 & 0.81& 0.98\\
    Bayesian Optimal Design & 0.26  & 0.69 &0.84 & 0.98\\
    \hline
    \end{tabular}%
  \label{tab:1dLogitIDEff}%
\end{table}

\begin{figure}[hbtp]
\centering
\subfloat[A-optimality]
{{\includegraphics[width=6cm]{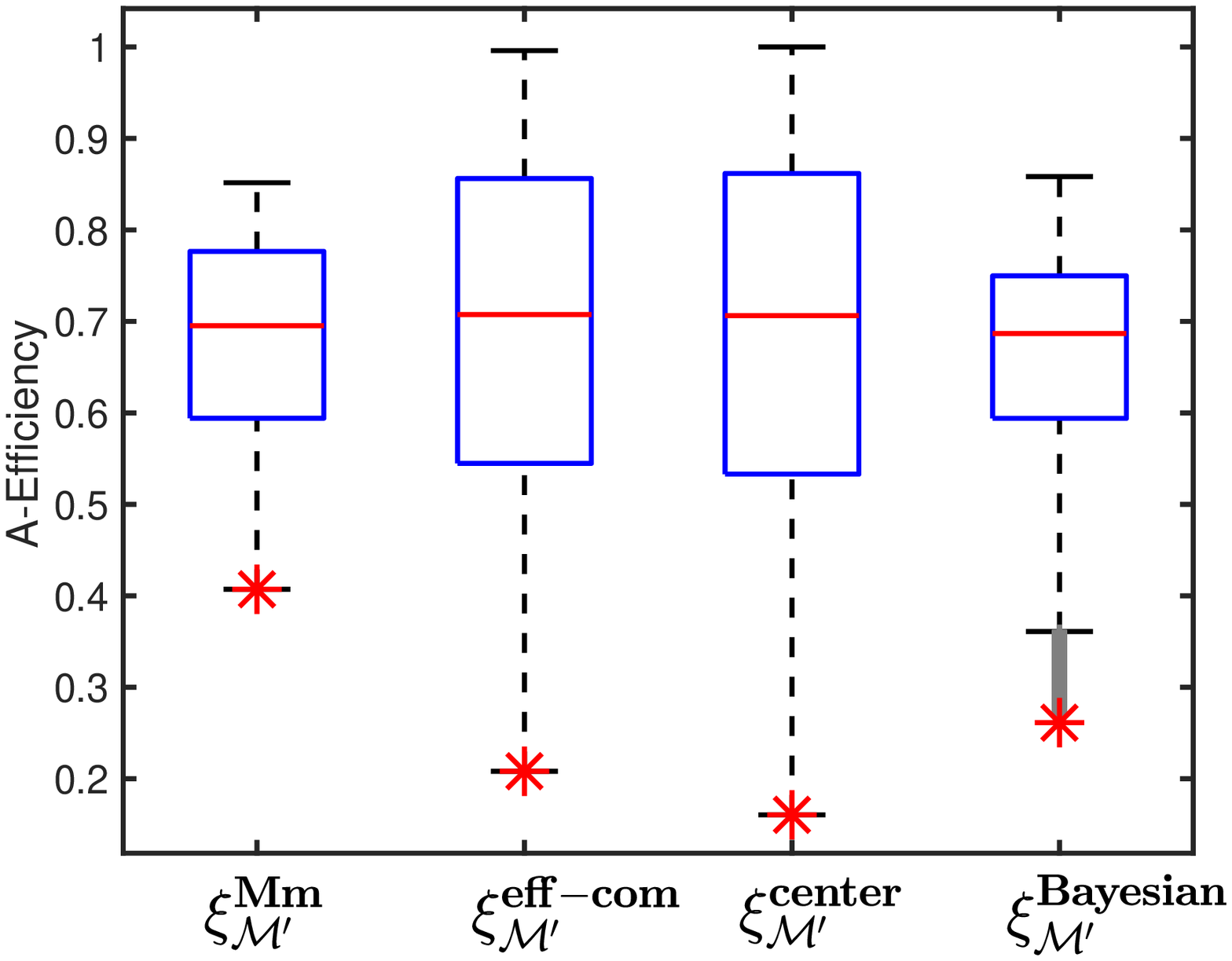}}}
\qquad
\subfloat[D-optimality]
{{\includegraphics[width=6cm]{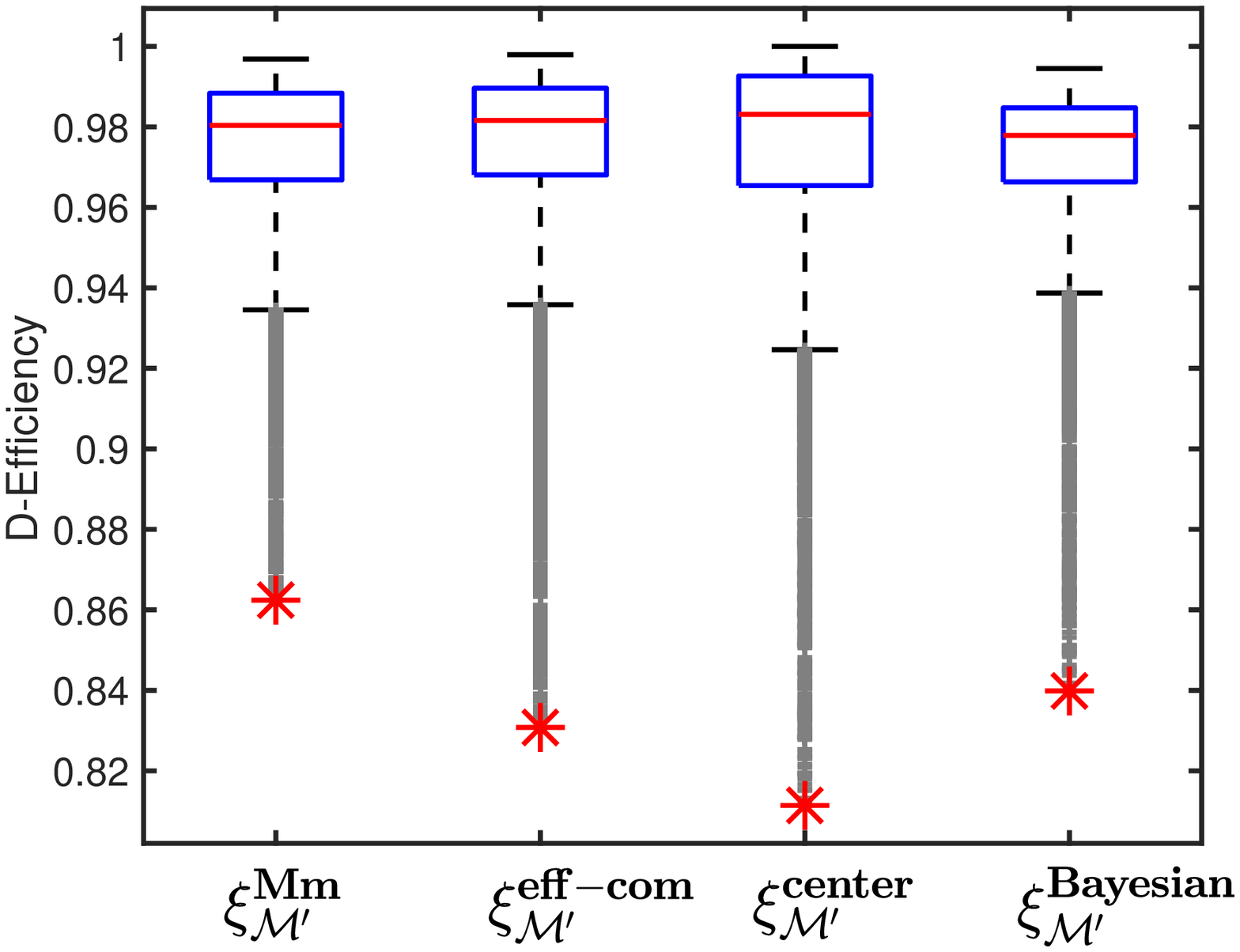}}}
\caption{Boxplot of A- and D-Efficiency of Four Designs at 10,000 Sampled $\vbeta$}
\label{fig:1dLogitBoxPlot}
\end{figure}

\begin{figure}[hbtp]
\centering
\subfloat[A-optimality]
{{\includegraphics[width=6cm]{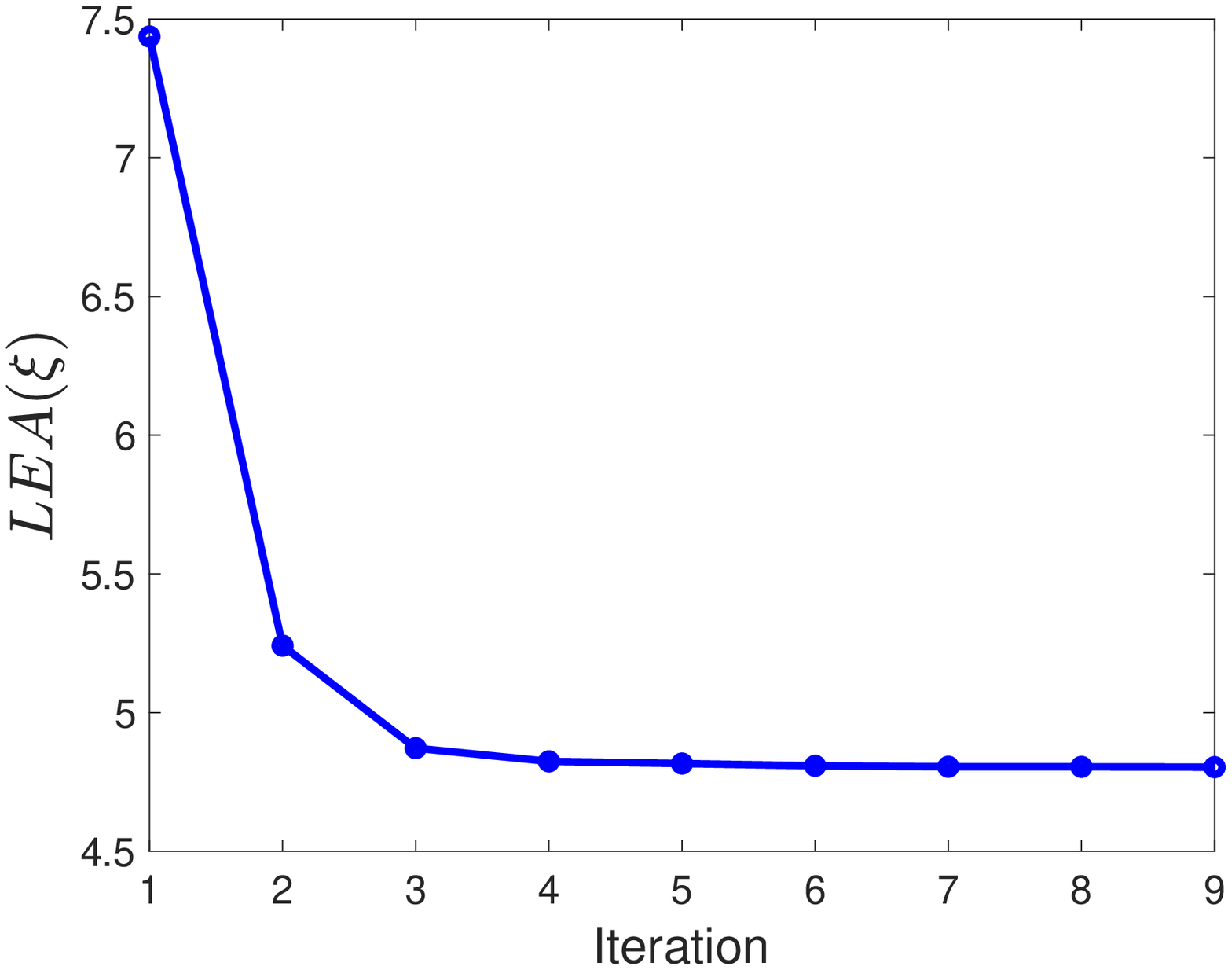}}}
\qquad
\subfloat[D-optimality]
{{\includegraphics[width=6cm]{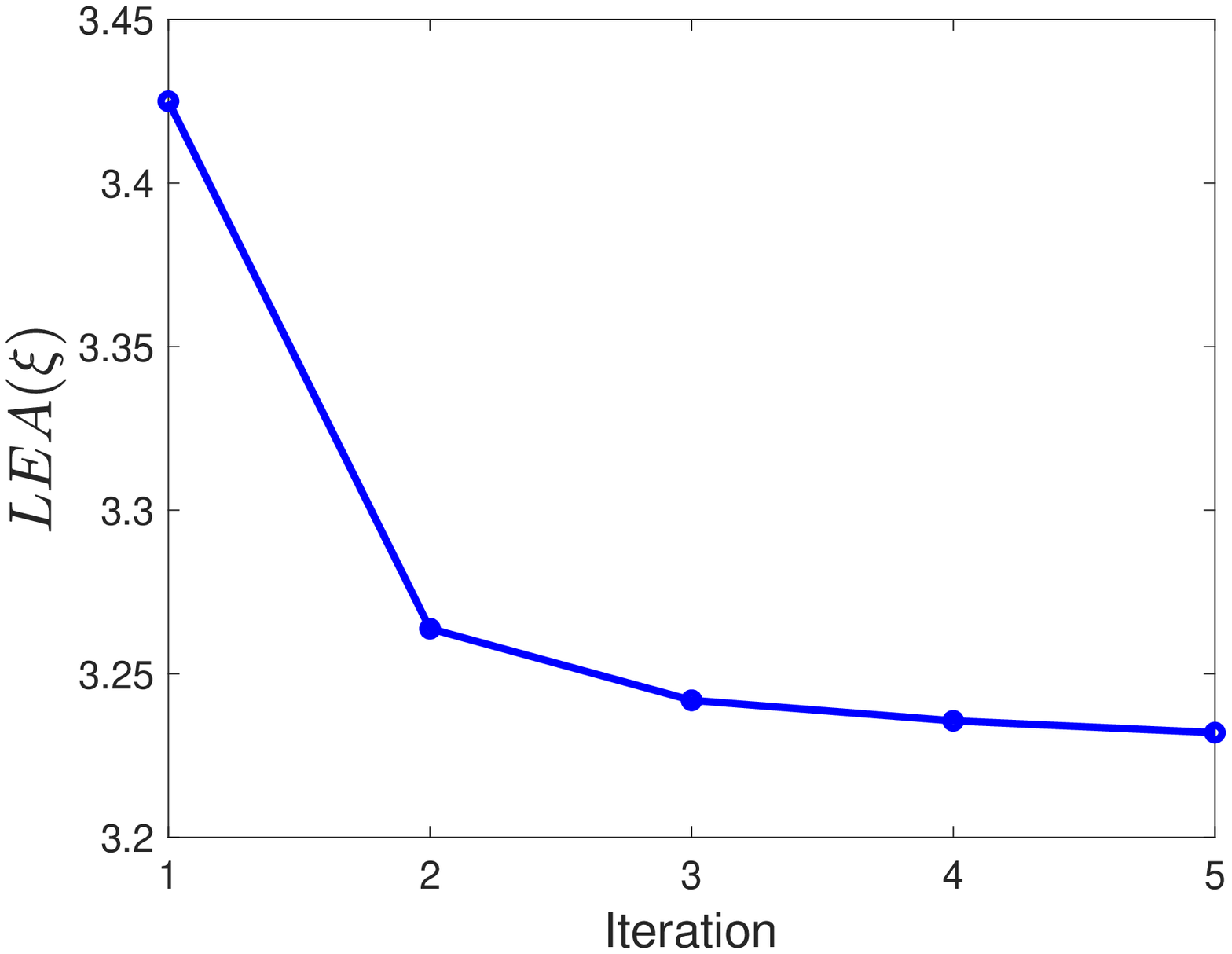}}}
\caption{$\lse(\xi^{(r)},\mathcal{M}')$ of the $r$-th iteration in Algorithm \ref{alg:sequential}.}
\label{fig:1dLogitIterVsObg}
\end{figure}

\subsection{Potato Packing Example}\label{sec: potato}

We consider a real-world example, the potato packing example in \cite{woods2006designs}, to further evaluate the proposed Mm-$\Phi_p$ design.
The experiment contains $d=3$ quantitative variables - vitamin concentration in the prepackaging dip and the amount of two kinds of gas in the packing atmosphere.
The response is binary representing the presence or absence of liquid in the pack after 7 days.
The basis functions of the logistic regression model always include the linear and quadratic terms of the input variables.
But one set of the basis functions contains the interaction terms and the other one does not.
The estimates of regression coefficients from a preliminary study in \cite{woods2006designs} are given in Table \ref{tab:PotatoPackModel} in the Appendix.
Since enhancing prediction accuracy is a major goal for the experiment, we use the prediction-oriented I-optimality \citep{atkinson2014optimal} to evaluate the design efficiency.
Note that the I-optimality shares the same mathematical structure as $\Phi_1$-optimality.
The design points of the designs are shown in Figure \ref{fig:PotatoPackPoints} in the Appendix.
Table \ref{tab:PotatoPackIEff} summarizes the I-efficiency of the Mm-$\Phi_p$ design, eff-compromise design, and I-compromise design of the three potential model specifications.
In terms of worst-case efficiency (i.e., smallest value of I-efficiency among $M_1$, $M_{2}$ and $M_{3}$), the proposed Mm-$\Phi_p$ design outperforms the other two designs by a large margin.
\begin{comment}
\begin{table}[htbp]
  \centering
  \caption{Model Space $\mathcal{M}$ of Potato Packing Example}
    \begin{tabular}{|lrrr|}
    \hline
    Term  & First-Order $M_1$ & With interaction $M_2$ & Second-order $M_3$  \\
    \hline
    Intercept & -0.28 & -1.44 & -2.93 \\
     $x_1$     & 0     & 0     & 0 \\
       $x_2$   & -0.76 & -1.95 & -0.52 \\
     $x_3$     & -1.15 & -2.36 & -0.79 \\
     $x_1x_2$     &       & 0     & 0 \\
      $x_1x_3$     &       & 0     & 0 \\
      $x_2x_3$     &       & -2.34 & -0.66 \\
     $x_1^2$     &       &       & 0.94 \\
     $x_2^2$     &       &       & 0.79 \\
     $x_3^2$     &       &       & 1.82 \\
    \hline
    \end{tabular}%
  \label{tab:PotatoPackModel}%
\end{table}%

\begin{figure}[ht]
\centering
\includegraphics[width=12cm]{}
\caption{Design Points of Mm-$\Phi_p$ Design and Compromise Designs}
\label{fig:PotatoPackPoints}
\end{figure}
\end{comment}

\begin{table}[htbp]
  \centering
  \caption{I-Efficiency of Mm-$\Phi_p$ Design, Eff-Compromise Design and I-Compromise Design}
    \begin{tabular}{|lrrr|}
    \hline
          &   $M_1$   & $M_2$ & $M_3$ \\
    \hline
    Mm-$\Phi_p$ Design & 0.64  & 0.71 & 0.82  \\
    Eff-Compromise Design & 0.52  & 0.78 & 0.92 \\
    I-Compromise Design & 0.49  & 0.80 & 0.92 \\
    \hline
    \end{tabular}%
  \label{tab:PotatoPackIEff}%
\end{table}% 

%
%\begin{figure}[ht]
%\centering
%\includegraphics[width=7cm]{}
%\caption{$\lse(\xi^{(r)},\mathcal{M}')$ of the $r$-th iteration in \textbf{Algorithm 1}}
%\label{fig:PotatoIterVsObj}
%\end{figure}

\section{Discussion}\label{sec:discussion}

In this article, we proposed a new maximin $\Phi_p$-efficiency criterion Mm-$\Phi_p$ for GLMs that aims at maximizing the worst-case design efficiency when various kinds of model uncertainties are considered, including uncertainties in link function, linear predictor and regression coefficients.
An efficient algorithm to construct the Mm-$\Phi_p$ design is also developed based on sound theoretical properties of the criterion.
The proposed Mm-$\Phi_p$ design and the algorithm can be easily extended to a more general case such as nonlinear models in \cite{13yang}. 
%and more details about $\Phi_p$-optimal designs for nonlinear models could be found in \cite{13yang}.
%The advantages of the proposed design over existing methods are demonstrated via numerical examples.

There are several directions for further research to enhance the proposed Mm-$\Phi_p$ design and algorithm.
%First,  we currently have only considered a single $p$ value in the $\Phi_p$ definition for the proposed design.
%A more flexible definition is to vary $p$ for different model specifications, or use an average of multiple $p$ choices for the same model specification.
First, to construct the Mm-$\Phi_p$ design, one needs to form a set of possible model specifications.
An interesting direction is how to extend the proposed design when such information of model specifications is limited or unavailable.
%Without a proper set of possible models, the robust design for GLMs would perform much worse than the compromise design, as it focuses on the worst-case which might not be a reasonable model specification.
Second, it is interesting to rigorously establish the convergence property of the optimal-weight procedure  (Algorithm \ref{alg:weight}),
which requires developing some other mathematical results.
%\ny{Yiou: Finally, one may also take model misspecification into consideration and define the corresponding design criterion.  \cite{wiens2018robust} studied such designs for linear regression models under D- and I-optimality. \cite{adewale2010robust} used an average mean squared prediction error design criterion to deal with over-dispersion problems and link function misspecification for GLMs. }
Third, the use of log-sum-exp approximation can be applied to other maximin design with a convex design criterion, and the theoretical and algorithmic developments can be adapted similarly.
We plan to extend the framework to the more general setting for other maximin designs with convex criterion.

\bibliography{GLM}

\begin{thebibliography}{39}
\newcommand{\enquote}[1]{``#1''}
\expandafter\ifx\csname natexlab\endcsname\relax\def\natexlab#1{#1}\fi

\bibitem[{Amzal et~al.(2006)Amzal, Bois, Parent, and
  Robert}]{amzal2006bayesian}
Amzal, B., Bois, F.~Y., Parent, E., and Robert, C.~P. (2006),
  \enquote{Bayesian-Optimal Design via Interacting Particle Systems,}
  \textit{Journal of the American Statistical Association}, 101, 773--785.

\bibitem[{Atkinson(2014)}]{atkinson2014optimal}
Atkinson, A.~C. (2014), \enquote{Optimal Design,} \textit{Wiley StatsRef:
  Statistics Reference Online}, 1--17.

\bibitem[{Atkinson et~al.(2006)Atkinson, Donev, and Tobias}]{06atk}
Atkinson, A.~C., Donev, A.~N., and Tobias, R.~D. (2006), \textit{Optimum
  Experimental Designs, with SAS}, Oxford University Press.

\bibitem[{Atkinson and Woods(2015)}]{atkinson2015designs}
Atkinson, A.~C. and Woods, D.~C. (2015), \enquote{Designs for generalized
  linear models,} \textit{Handbook of Design and Analysis of Experiments},
  471--514.

\bibitem[{Burghaus and Dette(2014)}]{burghaus2014optimal}
Burghaus, I. and Dette, H. (2014), \enquote{Optimal designs for nonlinear
  regression models with respect to non-informative priors,} \textit{Journal of
  Statistical Planning and Inference}, 154, 12--25.

\bibitem[{Calafiore and El~Ghaoui(2014)}]{calafiore2014optimization}
Calafiore, G.~C. and El~Ghaoui, L. (2014), \textit{Optimization models},
  Cambridge university press.

\bibitem[{Dean et~al.(2015)Dean, Morris, Stufken, and
  Bingham}]{dean2015handbook}
Dean, A., Morris, M., Stufken, J., and Bingham, D. (2015), \textit{Handbook of
  design and analysis of experiments}, vol.~7, CRC Press.

\bibitem[{Dror and Steinberg(2006)}]{dror2006robust}
Dror, H.~A. and Steinberg, D.~M. (2006), \enquote{Robust experimental design
  for multivariate generalized linear models,} \textit{Technometrics}, 48,
  520--529.

\bibitem[{Fedorov and Leonov(2013)}]{fedorov2013optimal}
Fedorov, V. and Leonov, S. (2013), \textit{Optimal Design for Nonlinear
  Response Models}, Chapman \& Hall/CRC Biostatistics Series.

\bibitem[{Fellman(1974)}]{74fellman}
Fellman, J. (1974), \textit{On the Allocation of Linear Observations}, Societas
  Scientiarum Fennica.

\bibitem[{Fiacco and Kortanek(1983)}]{83tor}
Fiacco, A.~V. and Kortanek, K.~O. (eds.) (1983), \textit{A Moment Inequality
  and Monotonicity of an Algorithm}.

\bibitem[{Grant et~al.(2009)Grant, Boyd, and Ye}]{grant2009cvx}
Grant, M., Boyd, S., and Ye, Y. (2009), \enquote{cvx users? guide,}
  \textit{online: http://www. stanford. edu/\~{} boyd/software. html}.

\bibitem[{Imhof and Wong(2000)}]{imhof2000graphical}
Imhof, L. and Wong, W.~K. (2000), \enquote{A graphical method for finding
  maximin efficiency designs,} \textit{Biometrics}, 56, 113--117.

\bibitem[{Khuri et~al.(2006)Khuri, Mukherjee, Sinha, and
  Ghosh}]{khuri2006design}
Khuri, A.~I., Mukherjee, B., Sinha, B.~K., and Ghosh, M. (2006),
  \enquote{Design issues for generalized linear models: A review,}
  \textit{Statistical Science}, 376--399.

\bibitem[{Kiefer(1974)}]{kiefer1974general}
Kiefer, J. (1974), \enquote{General equivalence theory for optimum designs
  (approximate theory),} \textit{The annals of Statistics}, 849--879.

\bibitem[{Kiefer(1985)}]{75kiefer}
--- (1985), \enquote{Optimal Design: Variation in Structure and Performance
  Under Change of Criterion,} \textit{Biometrika}, 62, 277--288.

\bibitem[{King and Wong(1998)}]{king1998optimal}
King, J. and Wong, W.~K. (1998), \enquote{Optimal minimax designs for
  prediction in heteroscedastic models,} \textit{Journal of Statistical
  Planning and Inference}, 69, 371--383.

\bibitem[{King and Wong(2000)}]{king2000minimax}
King, J. and Wong, W.-K. (2000), \enquote{Minimax D-optimal designs for the
  logistic model,} \textit{Biometrics}, 56, 1263--1267.

\bibitem[{Li and Majumdar(2009)}]{li2009some}
Li, G. and Majumdar, D. (2009), \enquote{Some results on D-optimal designs for
  nonlinear models with applications,} \textit{Biometrika}, 96, 487--493.

\bibitem[{Li and Deng(2018)}]{li2018efficient}
Li, Y. and Deng, X. (2018), \enquote{An Efficient Algorithm for Elastic
  I-optimal Design of Generalized Linear Models,} \textit{arXiv preprint
  arXiv:1801.05861}.

\bibitem[{Schwabe(1997)}]{schwabe1997maximin}
Schwabe, R. (1997), \enquote{Maximin efficient designs Another view at
  D-optimality,} \textit{Statistics \& probability letters}, 35, 109--114.

\bibitem[{Silvey(2013)}]{silvey2013optimal}
Silvey, S.~D. (2013), \textit{Optimal design: an introduction to the theory for
  parameter estimation}, vol.~1, Springer Science \& Business Media.

\bibitem[{Silvey et~al.(1978)Silvey, Titterington, and Torsney}]{78silvey}
Silvey, S.~D., Titterington, D.~M., and Torsney, B. (1978), \enquote{An
  algorithm for optimal designs on a finite design space,}
  \textit{Communications in Statistics - Theory and Methods}, 14, 1379--1389.

\bibitem[{Sitter(1992)}]{sitter1992robust}
Sitter, R.~R. (1992), \enquote{Robust designs for binary data,}
  \textit{Biometrics}, 1145--1155.

\bibitem[{Sitter and Torsney(1995)}]{sitter1995d-optimal}
Sitter, R.~R. and Torsney, B. (1995), \enquote{D-Optimal Designs for
  Generalized Linear Models,} in \textit{MODA4 --- Advances in Model-Oriented
  Data Analysis}, eds. Kitsos, C.~P. and M{\"u}ller, W.~G., Heidelberg:
  Physica-Verlag HD, pp. 87--102.

\bibitem[{Sobol'(1967)}]{sobol1967distribution}
Sobol', I.~M. (1967), \enquote{On the distribution of points in a cube and the
  approximate evaluation of integrals,} \textit{Zhurnal Vychislitel'noi
  Matematiki i Matematicheskoi Fiziki}, 7, 784--802.

\bibitem[{Waite and Woods(2015)}]{waite2015designs}
Waite, T.~W. and Woods, D.~C. (2015), \enquote{{Designs for generalized linear
  models with random block effects via information matrix approximations},}
  \textit{Biometrika}, 102, 677--693.

\bibitem[{Wong(1992)}]{wong1992unified}
Wong, W.-K. (1992), \enquote{A unified approach to the construction of minimax
  designs,} \textit{Biometrika}, 79, 611--619.

\bibitem[{Wong and Cook(1993)}]{wong1993heteroscedastic}
Wong, W.~K. and Cook, R.~D. (1993), \enquote{Heteroscedastic G-optimal
  designs,} \textit{Journal of the Royal Statistical Society: Series B
  (Methodological)}, 55, 871--880.

\bibitem[{Wong et~al.(2019)Wong, Yin, and Zhou}]{wong2019optimal}
Wong, W.~K., Yin, Y., and Zhou, J. (2019), \enquote{Optimal Designs for
  Multi-Response Nonlinear Regression Models With Several Factors via
  Semidefinite Programming,} \textit{Journal of Computational and Graphical
  Statistics}, 28, 61--73.

\bibitem[{Woods and Lewis(2011)}]{woods2011continuous}
Woods, D.~C. and Lewis, S.~M. (2011), \enquote{Continuous optimal designs for
  generalized linear models under model uncertainty,} \textit{Journal of
  Statistical Theory and Practice}, 5, 137--145.

\bibitem[{Woods et~al.(2006)Woods, Lewis, Eccleston, and
  Russell}]{woods2006designs}
Woods, D.~C., Lewis, S.~M., Eccleston, J.~A., and Russell, K. (2006),
  \enquote{Designs for generalized linear models with several variables and
  model uncertainty,} \textit{Technometrics}, 48, 284--292.

\bibitem[{Woods et~al.(2017)Woods, Overstall, Adamou, and
  Waite}]{woods2017bayesian}
Woods, D.~C., Overstall, A.~M., Adamou, M., and Waite, T.~W. (2017),
  \enquote{Bayesian design of experiments for generalized linear models and
  dimensional analysis with industrial and scientific application,}
  \textit{Quality Engineering}, 29, 91--103.

\bibitem[{Wu and Stufken(2014)}]{14wu}
Wu, H.-P. and Stufken, J. (2014), \enquote{{Locally $\phi_p$-optimal designs
  for generalized linear models with a single-variable quadratic polynomial
  predictor},} \textit{Biometrika}, 101, 365--375.

\bibitem[{Yang et~al.(2013)Yang, Biedermann, and Tang}]{13yang}
Yang, M., Biedermann, S., and Tang, E. (2013), \enquote{On Optimal Designs for
  Nonlinear Models: A general and Efficient Algorithm,} \textit{Journal of the
  American Statistical Association}, 108, 1411--1420.

\bibitem[{Yang and Stufken(2009)}]{09yang}
Yang, M. and Stufken, J. (2009), \enquote{Support points of locally optimal
  designs for nonlinear models with two parameters,} \textit{The Annals of
  Statistics}, 37, 518--541.

\bibitem[{Yang and Stufken(2012)}]{12yang}
--- (2012), \enquote{Identifying Locally Optimal Designs for Nonlinear Models:
  A Simple Extension with Profound Consequences,} \textit{The Annals of
  Statistics}, 40, 1665--1681.

\bibitem[{Yang et~al.(2011)Yang, Zhang, and Huang}]{yang2011optimal}
Yang, M., Zhang, B., and Huang, S. (2011), \enquote{Optimal designs for
  generalized linear models with multiple design variables,} \textit{Statistica
  Sinica}, 1415--1430.

\bibitem[{Yu(2010)}]{10yu}
Yu, Y. (2010), \enquote{Monotonic Convergence of a General Algorithm for
  Computing Optimal Designs,} \textit{The Annals of Statistics}, 38,
  1593--1606.

\end{thebibliography}

%%%%%%%%%%%%%%%%%%%%%%%%%%%%%%%%%%%%%%%%%%%%%%%%%%%%%%%%%%%%%%%%%%%%%%%%%%%%%%%%%%%%%%%%%%%%%%%%%%%%%%%%%%%%%%%%%%%%%%%%%%%%
\vskip .65cm
\noindent
Department of Mathematical Sciences, DePaul University, Chicago, IL
\vskip 2pt
\noindent
E-mail: yli139@depaul.edu
\vskip 2pt

\noindent
Department of Applied Mathematics, Illinois Institute of Technology, Chicago, IL
\vskip 2pt
\noindent
E-mail: lkang2@iit.edu

\noindent
Department of Statistics, Virginia Tech, Blacksburg, VA
\vskip 2pt
\noindent
E-mail: xdeng@vt.edu

\newpage
\setcounter{page}{1}
\section{Appendix}
\setcounter{lem}{0}
\begin{lem}\label{lem:SeConvex}
$\se(\cdot,\modelspace)$ defined in \eqref{eqn:criterion} is a convex function of design $\xi$.
\end{lem}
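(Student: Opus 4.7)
The proof strategy is to assemble convexity from three standard building blocks: affine dependence of the Fisher information on the design, convexity of each local $\Phi_p$-criterion in the design, and convexity-preserving operations (composition with a convex increasing function, and finite sums).

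First, I would note that the design $\xi$ enters $\mI(\xi, M_j)$ linearly via \eqref{eqn:fisher}, so that for any two designs $\xi_1, \xi_2$ and $\alpha \in [0,1]$,
\begin{equation*}
\mI((1-\alpha)\xi_1 + \alpha \xi_2, M_j) = (1-\alpha)\mI(\xi_1, M_j) + \alpha \mI(\xi_2, M_j).
\end{equation*}
Next I would invoke the classical result of \cite{kiefer1974general} that the generalized $\Phi_p$-criterion $\Phi_p(\cdot, M_j)$ defined in \eqref{eq:phi_p} is a convex functional on the space of approximate designs for every $p \geq 0$ (with the case $p\to 0$ interpreted as the log-determinant / D-optimality limit). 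This rests on the matrix-convexity of $\mA \mapsto \mA^{-1}$ on the cone of positive definite matrices together with convexity of $\mA \mapsto (\tr \mA^p)^{1/p}$; both are standard.

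Having $\Phi_p^j(\xi) := \Phi_p(\xi, M_j)$ convex in $\xi$, I would then observe that $\Phi_p^{\opt_j} = \Phi_p(\optdesMj, M_j)$ is a strictly positive constant that does not depend on $\xi$, so the map $\xi \mapsto \Phi_p^j(\xi)/\Phi_p^{\opt_j}$ is convex as well. Since $t \mapsto \exp(t)$ is convex and monotonically increasing on $\reals$, the composition
\begin{equation*}
\xi \;\longmapsto\; \exp\!\left(\frac{\Phi_p^j(\xi)}{\Phi_p^{\opt_j}}\right)
\end{equation*}
is convex for each $j = 1, \ldots, m$. Finally, summing convex functions preserves convexity, so $\se(\xi, \modelspace) = \sum_{j=1}^m \exp(\Phi_p^j(\xi)/\Phi_p^{\opt_j})$ is convex in $\xi$.

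The only point that requires a small amount of care is the invocation of convexity of $\Phi_p$, particularly at the boundary case $p = 0$ (D-optimality, where $\Phi_p$ degenerates to a log-determinant expression) and possibly for $0 < p < 1$ where the outer $(\cdot)^{1/p}$ is concave rather than convex. I would therefore separate the argument into the cases $p = 0$ and $p > 0$, citing Kiefer's result in each, and note that the exponential composition on the outside handles any monotone reparameterization uniformly. No genuine obstacle arises beyond this bookkeeping — the lemma is a direct application of standard convex-analytic calculus once the convexity of the underlying $\Phi_p$-functional is granted.
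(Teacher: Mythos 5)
Your argument is correct and is essentially the paper's own proof: both rest on the known convexity of $\Phi_p(\cdot,M_j)$ in $\xi$ (the paper cites \cite{12yang} where you cite Kiefer), followed by composition with the convex increasing function $\exp(\cdot)$ and summation over $j$. The extra bookkeeping you supply (affine dependence of $\mI(\xi,M_j)$ on $\xi$, the $p=0$ limit) is harmless elaboration of the same route, though note that for $0<p<1$ the outer power $t\mapsto t^{1/p}$ is convex, not concave, so the case you flag as delicate is not actually problematic.
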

\begin{proof}
Since $\Phi_p(\cdot)$ is convex with respect to $\xi$ \citep{12yang} and $\exp(\cdot)$ is a convex and strictly increasing function, the composite function $\xi\mapsto\exp\left(\frac{\Phi_p^j(\xi)}{\Phi_p^{\opt_j}}\right)$ is a convex function of $\xi$. As a result,  $\se(\cdot,\modelspace)$ is a convex function of $\xi$.
\end{proof}

\begin{lem}\label{lem:dirder}
The directional derivative of $\se(\xi,\modelspace)$ in the direction from $\xi$ to $\xi'$ is
\begin{align*}
\text{when }p=0, & \quad \phi(\xi',\xi)=\sum\limits_{j=1}^{m}\tphi^j_0(\xi)\left[q-\tr\left(\mF_j(\xi)^{-1}\mG_j(\xi,\xi')\right)\right],\\
\text{when }p>0, & \quad \phi(\xi',\xi)=\sum\limits_{j=1}^{m}\tphi^j_p(\xi)\left[\Phi_p^j(\xi)-q^{-1/p}\left(\tr\left(\mF_j(\xi)\right)^p\right)^{1/p-1} \tr\left(\left(\mF_j(\xi)\right)^{p-1}\mG_j(\xi,\xi')\right)\right],
\end{align*}
where
\begin{align*}
\tphi^j_p(\xi) &= \left[\Phi_p^{\opt_j}\right]^{-1}\exp\left(\frac{\Phi_p^j(\xi)}{\Phi_p^{\opt_j}}\right),\quad
\mB_j = \left.\frac{\partial \vf(\vbeta)}{\partial \vbeta^{\top}}\right|_{\vbeta = \vbeta_j},\\
\mF_j(\xi) &= \mB_j\mI_j(\xi)^{-1}\mB_j^{\top}, \quad \mG_j(\xi,\xi') = \mB_j\mI_j(\xi)^{-1}\mI_j(\xi')\mI_j(\xi)^{-1}\mB_j^{\top}.
\end{align*}
\end{lem}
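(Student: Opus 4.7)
The plan is to apply the chain rule to the smooth composition defining $\se(\xi,\modelspace)$, push the limit inside the finite sum over $M_j \in \modelspace$, and then reduce everything to computing $\frac{d}{d\alpha}\Phi_p^j(\xi_\alpha)\big|_{\alpha=0^+}$ along the line segment $\xi_\alpha := (1-\alpha)\xi + \alpha\xi'$. Since $\se$ is the sum of exponentials of $\Phi_p^j/\Phi_p^{\opt_j}$, and $\exp$ is smooth, the directional derivative equals
\[
\phi(\xi',\xi) \;=\; \sum_{j=1}^{m} \frac{1}{\Phi_p^{\opt_j}}\exp\!\left(\frac{\Phi_p^j(\xi)}{\Phi_p^{\opt_j}}\right) \cdot \frac{d}{d\alpha}\Phi_p^j(\xi_\alpha)\bigg|_{\alpha=0^+} \;=\; \sum_{j=1}^m \tphi_p^j(\xi)\cdot \frac{d}{d\alpha}\Phi_p^j(\xi_\alpha)\bigg|_{\alpha=0^+},
\]
so the work concentrates on the last factor.

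The first step is to exploit the linearity of the Fisher information in the design measure, which gives $\mI_j(\xi_\alpha) = (1-\alpha)\mI_j(\xi) + \alpha\mI_j(\xi')$ and hence $\frac{d}{d\alpha}\mI_j(\xi_\alpha)|_{\alpha=0^+} = \mI_j(\xi') - \mI_j(\xi)$. Then by the standard matrix identity $\frac{d}{d\alpha}\mI_j(\xi_\alpha)^{-1} = -\mI_j(\xi_\alpha)^{-1}\left(\frac{d}{d\alpha}\mI_j(\xi_\alpha)\right)\mI_j(\xi_\alpha)^{-1}$, evaluating at $\alpha=0$ and sandwiching between $\mB_j$ and $\mB_j^\top$ yields
\[
\frac{d}{d\alpha}\mF_j(\xi_\alpha)\bigg|_{\alpha=0^+} \;=\; \mF_j(\xi) - \mG_j(\xi,\xi'),
\]
which is the key building block needed for both cases.

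For $p>0$, I would write $\Phi_p^j(\xi_\alpha) = q^{-1/p}\bigl(\tr(\mF_j(\xi_\alpha))^p\bigr)^{1/p}$ and apply the chain rule twice: first to the outer $(\cdot)^{1/p}$, picking up $\frac{1}{p}(\tr\mF_j^p)^{1/p-1}$, and then to $\tr(\mF_j^p)$, using the standard identity $\frac{d}{d\alpha}\tr(\mF_j(\xi_\alpha)^p) = p\,\tr\bigl(\mF_j(\xi_\alpha)^{p-1}\frac{d}{d\alpha}\mF_j(\xi_\alpha)\bigr)$ which requires the commutation of $\mF_j^{p-1}$ with $\mF_j$ under the trace. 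Substituting the derivative of $\mF_j$ computed above and simplifying via $\tr(\mF_j^{p-1}\mF_j) = \tr(\mF_j^p)$ gives $\Phi_p^j(\xi) - q^{-1/p}(\tr\mF_j(\xi)^p)^{1/p-1}\tr\bigl(\mF_j(\xi)^{p-1}\mG_j(\xi,\xi')\bigr)$, which is exactly the claimed expression.

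For $p=0$, the stated formula corresponds to D-optimality with the $\log\det$ convention (reached as a limit of $\Phi_p^j$), so I would use $\frac{d}{d\alpha}\log\det\mF_j(\xi_\alpha) = \tr\bigl(\mF_j(\xi_\alpha)^{-1}\frac{d}{d\alpha}\mF_j(\xi_\alpha)\bigr)$, plug in the derivative of $\mF_j$, and use $\tr(\mF_j^{-1}\mF_j) = q$ to obtain $q - \tr(\mF_j(\xi)^{-1}\mG_j(\xi,\xi'))$. The main obstacle is not any deep step but the careful bookkeeping of matrix derivatives — in particular justifying the trace/power identity for $\tr(\mF_j^p)$ in the $p>0$ case (it follows from simultaneous diagonalization of $\mF_j$ and its perturbation to first order, or from cyclic invariance of trace combined with spectral calculus) and reconciling the $p=0$ case with the limit of the $p>0$ formula so the factor $\tphi_0^j(\xi)$ indeed arises without an extra $\Phi_0^j(\xi)$ multiplier.
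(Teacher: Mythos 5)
Your proposal is correct and follows essentially the same route as the paper's proof: linearity of the Fisher information along the segment $(1-\alpha)\xi+\alpha\xi'$, the matrix-inverse derivative identity to get $\frac{d}{d\alpha}\mF_j(\xi_\alpha)\big|_{\alpha=0^+}=\mF_j(\xi)-\mG_j(\xi,\xi')$, the $\log\det$ and trace-power chain rules for the two cases, and finally the chain rule through the sum of exponentials producing the $\tphi_p^j(\xi)$ factors. The only cosmetic difference is that the paper takes $\Phi_0^j=\log|\mF_j|$ by convention rather than as a limit of the $p>0$ formula, which disposes of the reconciliation issue you flag at the end.
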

\begin{proof}
Given $\tilde{\xi} = (1-\alpha)\xi+\alpha\xi'$, we have $\mI_j(\tilde{\xi}) = (1-\alpha)\mI_j(\xi)+\alpha\mI_j(\xi')$.
For any invertible matrix $\mS$, whose elements are functions of $\alpha$, the derivative of $\mS^{-1}$ is $\frac{\partial \mS^{-1}}{\partial \alpha} = -\mS^{-1}\frac{\partial \mS}{\partial \alpha}\mS^{-1}$.
So, the derivative of $\mI_j(\tilde{\xi})^{-1}$ with respect to $\alpha$ can be expressed as,
\begin{equation}\label{eqn:derder}
\frac{\partial \left[\mI_j(\tilde{\xi})^{-1}\right]}{\partial\alpha} = -\mI_j(\tilde{\xi})^{-1}[\mI_j(\xi')-\mI_j(\xi)]\mI_j(\tilde{\xi})^{-1}.
\end{equation}
Thus, for $p=0$,
\begin{align}\nonumber
\frac{\partial \Phi_0^j(\tilde{\xi})}{\partial\alpha} &=\frac{\partial \log\left|\mF_j(\tilde{\xi})\right|}{\partial\alpha}
=  \tr\left[\mF_j(\tilde{\xi})^{-1}\mB_j\frac{\partial \left[\mI_j(\tilde{\xi})^{-1}\right]}{\partial\alpha}\mB_j^{\top} \right]\\
&= -\tr\left[\mF_j(\tilde{\xi})^{-1}\mB_j\mI_j(\tilde{\xi})^{-1}[\mI_j(\xi')-\mI_j(\xi)]\mI_j(\tilde{\xi})^{-1}\mB_j^{\top} \right];\label{eqn:phider0}
\end{align}
for $p>0$,
\begin{align}\nonumber
\frac{\partial \Phi_p^j(\tilde{\xi})}{\partial\alpha}&= \frac{\partial \left[\left(q^{-1}\tr\left(\mF_j(\tilde{\xi})\right)^{p}\right)^{1/p}\right]}{\partial\alpha}\\
& = -q^{-1/p}\left(\tr\left(\mF_j(\tilde{\xi})\right)^p\right)^{1/p-1}\tr\left[\left(\mF_j(\tilde{\xi})\right)^{p-1}\mB_j\mI_j(\tilde{\xi})^{-1}[\mI_j(\xi')-\mI_j(\xi)]\mI_j(\tilde{\xi})^{-1}\mB_j^{\top}\right].\label{eqn:phider}
\end{align}
Based on \eqref{eqn:phider0} and \eqref{eqn:phider}, the directional derivative of $\se(\xi,\modelspace)$ is
\begin{align*}
\phi(\xi',\xi) &= \left.\frac{\partial\left[ \sum\limits_{j=1}^m \exp\left(\frac{\Phi_p^j(\tilde{\xi})}{\Phi_p^{\opt_j}}\right)\right]}{\partial\alpha}\right|_{\alpha=0}
= \sum\limits_{j=1}^{m}\left.\tphi^j_p(\xi)\frac{\partial \Phi_p^j(\tilde{\xi})}{\partial\alpha}\right|_{\alpha=0}\\
&=\left\{\begin{array}{ll}
\sum\limits_{j=1}^{m}\tphi^j_0(\xi)\left[q-\tr\left(\mF_j(\xi)^{-1}\mG_j(\xi,\xi') \right)\right],\,\,\,& p = 0;\\
\sum\limits_{j=1}^{m}\tphi^j_p(\xi)\left[\Phi_p^j(\xi)-q^{-1/p}\left(\tr\left(\mF_j(\xi)\right)^p\right)^{1/p-1} \tr\left(\left(\mF_j(\xi)\right)^{p-1}\mG_j(\xi,\xi')\right)\right], & p>0. \end{array}\right.
\end{align*}
\end{proof}

\begin{lem}\label{lem:dirder2}
The directional derivative of $\se(\xi,\modelspace)$ in the direction of a single point $\vx$ is given as,
$$\phi(\vx,\xi) = \left\{\begin{array}{ll}
\sum\limits_{j=1}^{m}\tphi^j_p(\xi)\left[q-w_j(\vx)\vg_j^{\top}(\vx)\mM_j(\xi)\vg_j(\vx)\right],\,\,\,& p = 0;\\
\sum\limits_{j=1}^{m}\tphi^j_p(\xi)\left[\Phi_p^j(\xi)-q^{-1/p}w_j(\vx)\left(\tr\left(\mF_j(\xi)\right)^p\right)^{1/p-1}\vg_j^{\top}(\vx)\mM_j(\xi)\vg_j(\vx)\right],\,\,\,& p > 0,\end{array}\right.$$
where $\mM_j(\xi) = \mI_j(\xi)^{-1}\mB_j^{\top}\mF_j(\xi)^{p-1}\mB_j\mI_j(\xi)^{-1}$.
\end{lem}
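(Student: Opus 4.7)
The plan is to recognize Lemma \ref{lem:dirder2} as a direct specialization of Lemma \ref{lem:dirder}, with $\xi'$ taken to be the degenerate one-point design placing unit mass at $\vx$, and to then simplify the resulting trace expressions using the rank-one structure that emerges.

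First, I would write down the Fisher information matrix of the one-point design $\xi'$ directly from \eqref{eqn:fisher}: since the single support point $\vx$ carries weight $\lambda=1$, we have $\mI_j(\xi') = w_j(\vx)\vg_j(\vx)\vg_j^{\top}(\vx)$, a rank-one matrix. Substituting this into the definition $\mG_j(\xi,\xi') = \mB_j\mI_j(\xi)^{-1}\mI_j(\xi')\mI_j(\xi)^{-1}\mB_j^{\top}$ from Lemma \ref{lem:dirder} produces
$$\mG_j(\xi,\xi') = w_j(\vx)\,\mB_j\mI_j(\xi)^{-1}\vg_j(\vx)\vg_j^{\top}(\vx)\mI_j(\xi)^{-1}\mB_j^{\top},$$
which is itself rank-one (up to the scalar $w_j(\vx)$).

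Next, in both branches of Lemma \ref{lem:dirder} the key quantity is a trace of the form $\tr(\mF_j(\xi)^{p-1}\mG_j(\xi,\xi'))$, where the convention $p-1=-1$ is used in the $p=0$ case. Here I would apply the cyclic property of trace together with the elementary identity $\tr(\mA\vu\vu^{\top}) = \vu^{\top}\mA\vu$ to collapse the trace into the scalar $w_j(\vx)\vg_j^{\top}(\vx)\mM_j(\xi)\vg_j(\vx)$, where $\mM_j(\xi) = \mI_j(\xi)^{-1}\mB_j^{\top}\mF_j(\xi)^{p-1}\mB_j\mI_j(\xi)^{-1}$ is exactly the matrix in the statement. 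Plugging this back into the two cases of Lemma \ref{lem:dirder} yields the two displayed formulas in Lemma \ref{lem:dirder2}.

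There is no real analytical obstacle to flag: Lemma \ref{lem:dirder} has already carried out all the differentiation, and everything remaining is algebraic bookkeeping via the cyclic trace identity. The only minor point I would verify carefully is that the matrix power in the definition of $\mM_j(\xi)$ correctly reduces to $\mF_j(\xi)^{-1}$ in the $p=0$ case, so that the two branches share a single unified expression; this is immediate from the convention $p-1=-1$.
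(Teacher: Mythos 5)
Your proposal is correct and follows exactly the route the paper intends: Lemma \ref{lem:dirder2} is obtained from Lemma \ref{lem:dirder} by setting $\mI_j(\xi') = w_j(\vx)\vg_j(\vx)\vg_j^{\top}(\vx)$ for the one-point design and collapsing $\tr\left(\mF_j(\xi)^{p-1}\mG_j(\xi,\xi')\right)$ to $w_j(\vx)\vg_j^{\top}(\vx)\mM_j(\xi)\vg_j(\vx)$ via the cyclic trace identity, with the $p=0$ branch handled by $\mF_j(\xi)^{-1}$. The paper states this specialization without writing out the algebra; your argument supplies those details correctly.
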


Particularly, the directional derivatives of D-, A- and the prediction-oriented criterion EI-optimality defined in \citep{li2018efficient} are:
$$\phi(\vx,\xi) = \left\{\begin{array}{ll}
\sum\limits_{j=1}^{m}\tphi^j_p(\xi)\left[l-w_j(\vx)\vg_j^{\top}(\vx)\mI^{-1}_j(\xi)\vg_j(\vx)\right],\,\,\,&\text{D-optimality};\\
\sum\limits_{j=1}^{m}\tphi^j_p(\xi)\left[\tr(\mI_j^{-1}(\xi))-w_j(\vx)\vg_j^{\top}(\vx)\mI^{-2}_j(\xi)\vg_j(\vx)\right],\,\,\,&\text{A-optimality};\\
\sum\limits_{j=1}^{m}\tphi^j_p(\xi)\left[\tr(\mI_j^{-1}(\xi)\mA_j)-w_j(\vx)\vg_j^{\top}(\vx)\mI^{-1}_j(\xi)\mA_j\mI^{-1}_j(\xi)\vg_j(\vx)\right],\,\,\,&\text{EI-optimality},
\end{array}\right.$$
where $\mA_j =\int_{\Omega}\vg_j(\vx)\vg_j^\top(\vx)\left[\frac{\dif h_j^{-1}}{\dif \eta_j}\right]^2\dif F_{\IMSE}(\vx)$ for EI-optimality is pre-determined and it does not depend on the design $\xi$.
The cdf $F_{\IMSE}$ is the user-specified distribution for the EI-optimality.

\begin{lem}\label{lem:ConvWeights}
For a design $\xi^{\vlambda}$ with fixed design points $\vx_1,\cdots,\vx_n$, $\se(\cdot,\modelspace)$ in \eqref{eqn:weightlb} is a convex function with respect to the weight vector $\vlambda$.
\end{lem}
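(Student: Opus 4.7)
The plan is to reduce this to Lemma \ref{lem:SeConvex}, which already establishes that $\se(\cdot,\modelspace)$ is convex in the design $\xi$ viewed as a probability measure. The key observation is that varying the weight vector $\vlambda$ over the simplex $\{\vlambda : \lambda_i \ge 0, \sum \lambda_i = 1\}$ while holding the support points $\vx_1,\ldots,\vx_n$ fixed is exactly the same as varying the design $\xi$ within the convex subset of designs supported on $\{\vx_1,\ldots,\vx_n\}$. Convexity on a larger domain forces convexity on any affine subset of that domain, so the result should be essentially immediate.

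Concretely, the first step is to observe that the map $\vlambda \mapsto \xi^{\vlambda} = \sum_{i=1}^n \lambda_i \delta_{\vx_i}$ is affine in $\vlambda$: for any two weight vectors $\vlambda, \vlambda'$ in the simplex and any $\alpha \in [0,1]$,
\begin{equation*}
\xi^{(1-\alpha)\vlambda + \alpha \vlambda'} \;=\; \sum_{i=1}^n \bigl[(1-\alpha)\lambda_i + \alpha \lambda_i'\bigr]\delta_{\vx_i} \;=\; (1-\alpha)\xi^{\vlambda} + \alpha\, \xi^{\vlambda'},
\end{equation*}
where the last equality is equality of probability measures on $\Omega$. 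Next, applying Lemma \ref{lem:SeConvex} to the two designs $\xi^{\vlambda}$ and $\xi^{\vlambda'}$ yields
\begin{equation*}
\se\bigl((1-\alpha)\xi^{\vlambda} + \alpha \xi^{\vlambda'},\modelspace\bigr) \;\le\; (1-\alpha)\se(\xi^{\vlambda},\modelspace) + \alpha\,\se(\xi^{\vlambda'},\modelspace).
\end{equation*}
Combining these two displays gives $\se(\xi^{(1-\alpha)\vlambda + \alpha\vlambda'},\modelspace) \le (1-\alpha)\se(\xi^{\vlambda},\modelspace) + \alpha\,\se(\xi^{\vlambda'},\modelspace)$, which is exactly convexity in $\vlambda$.

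There is no real obstacle here; the entire content is the affinity of $\vlambda \mapsto \xi^{\vlambda}$ together with the already-proven global convexity. If one preferred a self-contained argument without citing Lemma \ref{lem:SeConvex}, one could note that $\mI_j(\xi^{\vlambda}) = \sum_i \lambda_i w_j(\vx_i) \vg_j(\vx_i) \vg_j^\top(\vx_i)$ is affine (hence concave in the Loewner order) in $\vlambda$, invoke operator convexity of the matrix inverse on the positive-definite cone to get convexity of $\mI_j(\xi^{\vlambda})^{-1}$, push this through the known convexity of the $\Phi_p$ functionals, and finally compose with the convex nondecreasing map $\exp$. But the one-line reduction via Lemma \ref{lem:SeConvex} is cleaner and is the path I would take.
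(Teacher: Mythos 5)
Your proof is correct and matches the paper's intent: the paper disposes of this lemma with the single line ``the proof is similar to that of Lemma \ref{lem:SeConvex},'' i.e.\ the same composition argument (convexity of $\Phi_p$, monotone convexity of $\exp$, summation) applied with $\vlambda$ as the variable, which is exactly the self-contained alternative you sketch at the end. Your primary route---observing that $\vlambda \mapsto \xi^{\vlambda}$ is affine and restricting the already-proven convex function to this affine image---is a clean and equally valid formalization of the same idea, and arguably makes the dependence on Lemma \ref{lem:SeConvex} more explicit than the paper does.
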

\begin{proof}
The proof is similar to that of Lemma \ref{lem:SeConvex}.
\end{proof}

%--------------------------------------------------------
\noindent\textbf{Proof of Theorem \ref{thm:equi_thm}}
\begin{proof}
\begin{enumerate}[i.~]
\item
$(1)\rightarrow (2)$: As $\se(\cdot,\modelspace)$ is a convex function in $\xi$ proved in Lemma \ref{lem:SeConvex}, the directional derivative $\phi(\vx,\robdes)\geq 0$ holds for any $\vx\in\Omega$, and the inequality becomes equality if $\vx$ is a support point of the design $\robdes$.
\item
$(2) \rightarrow (1)$: If $\phi(\vx,\robdes)\geq 0$ holds for any $\vx\in\Omega$, then $\robdes$ minimizes $\se(\xi,\modelspace)$ as  $\se(\cdot,\modelspace)$ is a convex function in $\xi$.
\end{enumerate}
\end{proof}

%-------------------------------------------------------
\noindent\textbf{Proof of Lemma \ref{lem:ineqofdirder}}
\begin{proof}
The lemma is proved for the $p>0$ case, and the case $p=0$ could be proved similarly.
The directional derivative of $\se(\xi,\modelspace)$ in the direction of the point $\vx^{*} = \argmin\limits_{\vx\in\Omega}\phi(\vx,\xi)$ is:
\begin{eqnarray}\label{inequ:phixxi}
& &\min\limits_{\vx\in\Omega} \phi(\vx,\xi)=\phi(\vx^{*},\xi)\nonumber\\
 &=& \sum\limits_{j=1}^{m}\tphi_j(\xi)\left[\Phi_p^j(\xi)-q^{-1/p}\left(\tr\left(\mF_j(\xi)\right)^p\right)^{1/p-1} \tr\left(\left(\mF_j(\xi)\right)^{p-1}\mB_j\mI_j(\xi^{(r)})^{-1}\mI_j(\vx_r^*)\mI_j(\xi)^{-1}\mB_j^{\top}\right)\right]\nonumber\\
 &\leq & \sum\limits_{j=1}^{m}\tphi_j(\xi)\left[\Phi_p^j(\xi)-q^{-1/p}\left(\tr\left(\mF_j(\xi)\right)^p\right)^{1/p-1} \tr\left(\left(\mF_j(\xi)\right)^{p-1}\mB_j\mI_j(\xi)^{-1}\mI_j(\vz)\mI_j(\xi)^{-1}\mB_j^{\top}\right)\right]\nonumber\\
\end{eqnarray}
for any $\vx\in\Omega$, where $\mI_j(\vx)$ denotes the information matrix of the design with a unit mass on single point $\vx$.

Denote the Mm-$\Phi_p$ design $\robdes = \left\{\begin{array}{ccc}
\vx_1,&...,&\vx_n\\
\lambda^*_1,&...,&\lambda^*_n
\end{array}\right\}$. With \eqref{inequ:phixxi}, we have
\begin{eqnarray}\label{inequ:phixixi}
&&\phi(\vx^*,\xi)\nonumber\\
 &\leq & \sum_{i=1}^n\lambda^*_i\sum\limits_{j=1}^{m}\tphi_j(\xi)\left[\Phi_p^j(\xi)-q^{-1/p}\left(\tr\left(\mF_j(\xi)\right)^p\right)^{1/p-1} \tr\left(\left(\mF_j(\xi)\right)^{p-1}\mB_j\mI_j(\xi)^{-1}\mI_j(\vx_i)\mI_j(\xi)^{-1}\mB_j^{\top}\right)\right]\nonumber\\
 &=&\sum\limits_{j=1}^{m}\tphi_j(\xi)\left[\Phi_p^j(\xi)-q^{-1/p}\left(\tr\left(\mF_j(\xi)\right)^p\right)^{1/p-1} \tr\left(\left(\mF_j(\xi)\right)^{p-1}\mB_j\mI_j(\xi)^{-1}\mI_j(\robdes)\mI_j(\xi)^{-1}\mB_j^{\top}\right)\right]\nonumber\\
 &=& \phi(\robdes,\xi)
\end{eqnarray}

Furthermore, with the definition of directional derivative in the direction of the Mm-$\Phi_p$ design $\robdes$ and convexity of $\se(\cdot,\modelspace)$, we have
\begin{eqnarray}\label{eqn:dirinequ}
\phi(\robdes,\xi) &=& \lim_{\alpha\rightarrow 0}\frac{\se((1-\alpha)\xi+\alpha\robdes,\modelspace)-\se(\xi,\modelspace)}{\alpha}\nonumber\\
&\leq &\lim_{\alpha\rightarrow 0}\frac{(1-\alpha)\se(\xi,\modelspace)+\alpha\se(\robdes,\modelspace)-\se(\xi,\modelspace)}{\alpha}\nonumber\\
&=& \se(\robdes,\modelspace)-\se(\xi,\modelspace)
\end{eqnarray}

Combining \eqref{inequ:phixixi} and \eqref{eqn:dirinequ}, we complete the proof that
$$\min\limits_{\vx\in\Omega} \phi(\vx,\xi)\leq \phi(\robdes,\xi)\leq \se(\robdes,\modelspace)-\se(\xi,\modelspace)\leq 0.$$
\end{proof}

%-------------------------------------------------------
\noindent\textbf{Proof of Theorem \ref{thm:lowerboundeff}}
%\begin{thm}[A Lower Bound of $\lse$-Efficiency]\label{}
%Design $\robdes$ is the Mm-$\Phi_p$ design that minimizes $\lse$ criterion in \eqref{eqn:lse}.
%The $\lse$-efficiency defined in \eqref{eqn:robusteff} of any design $\xi$ relative to $\robdes$ is bounded below by
%$$\Eff_{\lse}(\xi,\robdes;\modelspace) \geq  1+2\frac{\min\limits_{\vx\in\Omega} \phi(\vx,\xi)}{\se(\xi,\modelspace)}.$$
%\end{thm}
\begin{proof}
When $1+2\frac{\min\limits_{\vx\in\Omega} \phi(\vx,\xi)}{\se(\xi,\modelspace)}<0$, $\Eff_{\lse}(\xi,\robdes;\modelspace) \geq  1+2\frac{\min\limits_{\vx\in\Omega} \phi(\vx,\xi)}{\se(\xi,\modelspace)}$ holds automatically.

When $1+2\frac{\min\limits_{\vx\in\Omega} \phi(\vx,\xi)}{\se(\xi,\modelspace)}\geq 0$, that is, $\frac{\min\limits_{\vx\in\Omega} \phi(\vx,\xi)}{\se(\xi,\modelspace)}\geq -0.5$, define $\frac{\se(\robdes,\modelspace)}{\se(\xi,\modelspace)} = a >0$, then it follows immediately from Lemma \ref{lem:ineqofdirder} that
\begin{equation}\label{inequ:eff}
1\geq a = \frac{\se(\robdes,\modelspace)}{\se(\xi,\modelspace)}\geq 1+\frac{\min\limits_{\vx\in\Omega} \phi(\vx,\xi)}{\se(\xi,\modelspace)}\geq 0.5.
\end{equation}

Since the function $\frac{\ln(a)}{\ln(\se(\xi,\modelspace))}+1-a$ is an increasing function of $\se(\xi,\modelspace)$, $\se(\xi,\modelspace)\geq e$ because of its definition and $a\geq 0.5$, we have
\begin{eqnarray*}
&&\left|\Eff_{\lse}(\xi,\robdes;\modelspace)-\frac{\se(\robdes,\modelspace)}{\se(\xi,\modelspace)}\right|= \left|\frac{\ln(a\se(\xi,\modelspace))}{\ln(\se(\xi,\modelspace))}-\frac{a\se(\xi,\modelspace)}{\se(\xi,\modelspace)}\right|\\
&=& \left|\frac{\ln(a)}{\ln(\se(\xi,\modelspace))}+1-a\right|\leq \max(\left|\ln(a)+1-a\right|,\left|1-a\right|)\\
&=& \max(-\ln(a)-1+a,1-a)=1-a.
\end{eqnarray*}
Thus, together with \eqref{inequ:eff}, $\Eff_{\lse}(\xi,\robdes;\modelspace)\geq 2 a-1\geq 1+2\frac{\min\limits_{\vx\in\Omega} \phi(\vx,\xi)}{\se(\xi,\modelspace)}$.
\end{proof}

%-------------------------------------------------------
\noindent\textbf{Proof of Theorem \ref{thm:cong-algo2}}
%\begin{thm}[Convergence of Algorithm \ref{alg:sequential}(Mm-$\Phi_p$ Algorithm)]\label{thm:cong-algo2}
%Assume the candidate pool $\mathcal{C}$ contains all the support points of the Mm-$\Phi_p$ design $\robdes$.
%The design constructed by Algorithm \ref{alg:sequential} converges to $\robdes$ that minimizes $\lse(\xi,\modelspace)$, i.e.,
%\[\lim\limits_{r\rightarrow\infty} \lse(\xi^{(r)},\modelspace) = \lse(\robdes,\modelspace).\]
%\end{thm}
\begin{proof}
We show the proof for the scenario $p>0$ in the $\Phi_p$-criterion. The proof for $p=0$ could be done similarly.
The proof is established by proof of contradiction.
Suppose that the Algorithm 1 does not converge to the Mm-$\Phi_p$ design $\xi^*$, then we have
$$\lim_{r\rightarrow\infty} \se(\xi^{(r)},\modelspace) > \se(\robdes,\modelspace).$$

For any iteration $r+1\geq 1$, since $\mathcal{X}^{(r)}\subset \mathcal{X}^{(r+1)}$ and the Optimal-Weight Procedure returns optimal weight vector that minimizes $\se$ criterion,
the design $\xi^{(r+1)}$ cannot be worse than the design in the previous iteration $\xi^{(r)}$, i.e.,
$$\se(\xi^{(r+1)},\modelspace)\leq \se(\xi^{(r)},\modelspace).$$
Thus, for all $r\geq 0$, there exists $a>0$, such that,
$$\se(\xi^{(r)},\modelspace)>\se(\robdes,\modelspace)+a.$$
According to Lemma \ref{lem:ineqofdirder},
$$\phi(\vx_r^*,\xi^{(r)})\leq \phi(\robdes,\xi^{(r)})\leq \se(\robdes,\modelspace)-\se(\xi^{(r)},\modelspace)<-a,$$
for any $r\geq 0$.
Then, the Taylor expansion of $\se((1-\alpha)\xi^{(r)}+\alpha\vx_r^*,\modelspace)$ is upper bounded by
\begin{eqnarray}\label{eqn:taylorlb}
\se((1-\alpha)\xi^{(r)}+\alpha\vx_r^*,\modelspace) &=& \se(\xi^{(r)},\modelspace)+\phi(\vx_r^*,\xi^{(r)})\alpha+\frac{u}{2}\alpha^2\nonumber\\
&<& \se(\xi^{(r)},\modelspace)-a\alpha+\frac{u}{2}\alpha^2,
\end{eqnarray}
where $u\geq 0$ is the second-order directional derivative of $\se$ evaluated at a value between 0 and $\alpha$.

For Algorithm 1, the criterion $\se$ is minimized, for all $0\leq\alpha\leq 1$ we have
\begin{eqnarray*}
\se(\xi^{(r+1)},\modelspace) &\leq& \se((1-\alpha)\xi^{(r)}+\alpha\vx_r^*,\modelspace)\\
&<& \se(\xi^{(r)},\modelspace)-a\alpha+\frac{u}{2}\alpha^2,
\end{eqnarray*}
or equivalently,
\begin{eqnarray*}
&&\se(\xi^{(r+1)},\modelspace) - \se(\xi^{(r)},\modelspace) < -a\alpha+\frac{u}{2}\alpha^2 = \frac{u}{2}\left(\alpha-\frac{a}{u}\right)^2-\frac{a^2}{2u}\\
&<&\left\{\begin{array}{ll}-\frac{a^2}{2u}<0,\,\,\,\,&\text{choosing}\,\,\alpha = \frac{a}{u}\,\,\text{if}\,\,\,\,\,a\leq u\\
\frac{u-4a}{8}<0,\,\,\,\,&\text{choosing}\,\,\alpha = 0.5\,\,\text{if}\,\,\,\,\,a> u
\end{array}\right..
\end{eqnarray*}
As a result, $\lim\limits_{r\rightarrow \infty}\se(\xi^{(r)},\modelspace)  = -\infty$, which contradicts with the fact that $\se(\xi^{(r)},\modelspace)\geq 0$ for any design $\xi^{(r)}$. Thus,
$$\lim\limits_{r\rightarrow\infty} \se(\xi^{(r)},\modelspace) = \se(\robdes,\modelspace).$$
Since $\ln(\cdot)$ on $[1,\infty)$ is a continuous function,
$$\lim\limits_{r\rightarrow\infty} \lse(\xi^{(r)},\modelspace) = \lse(\robdes,\modelspace).$$
\end{proof}

%-------------------------------
\noindent\textbf{Description of Algorithm \ref{alg:weight}}. 
\begin{algorithm}[ht]
  \caption{(\textbf{Optimal-Weight Procedure}) A Modified Multiplicative Approach. \label{alg:weight}}
  \begin{algorithmic}[1]
  \State Assign a uniform initial weight vector $\vlambda^{(0)} = [\lambda_1^{(0)},...,\lambda_n^{(0)}]^{\top}$, and $k=0$.
  \While {$change>Tol$ and $k<MaxIter_2$}
  \For {$i = 1, \ldots, n$}
 \State Update the weight of design point $\vx_i$:
\begin{eqnarray}\label{for:multialg}
&& \lambda_i^{(k+1)} = \lambda_i^{(k)} \frac{\left(d_p(\vx_i,\xi^{\vlambda^{(k)}})\right)^{\delta}}{\sum\limits_{s=1}^n\lambda_s^{(k)}\left(d_p(\vx_s,\xi^{\vlambda^{(k)}})\right)^{\delta}},\nonumber\\
&=&  \left\{\begin{array}{ll}
 \lambda_i^{(k)} \frac{\left(\sum\limits_{j=1}^m \tphi_0^j(\xi^{\vlambda^{(k)}}) w_j(\vx_i)\vg_j^{\top}(\vx_i)\mM_j(\xi^{\vlambda^{(k)}})\vg_j(\vx_i)\right)^\delta}{\sum\limits_{s=1}^n \lambda_s^{(k)}\left(\sum\limits_{j=1}^m \tphi_0^j(\xi^{\vlambda^{(k)}}) w_j(\vx_i)\vg_j^{\top}(\vx_i)\mM_j(\xi^{\vlambda^{(k)}})\vg_j(\vx_i)\right)^\delta}, & p=0;\\
   \lambda_i^{(k)} \frac{\left(\sum\limits_{j=1}^m\tphi_p^j(\xi^{\vlambda^{(k)}})w_j(\vx_i)\left(\tr\left(\mF_j(\xi^{\vlambda^{(k)}})\right)^p\right)^{1/p-1}\vg_j^{\top}(\vx_i)\mM_j(\xi^{\vlambda^{(k)}})\vg_j(\vx_i)\right)^\delta}{\sum\limits_{s=1}^n \lambda_s^{(k)}\left(\sum\limits_{j=1}^m\tphi_p^j(\xi^{\vlambda^{(k)}})w_j(\vx_i)\left(\tr\left(\mF_j(\xi^{\vlambda^{(k)}})\right)^p\right)^{1/p-1}\vg_j^{\top}(\vx_i)\mM_j(\xi^{\vlambda^{(k)}})\vg_j(\vx_i)\right)^\delta }, & p>0.
 \end{array}\right.
\end{eqnarray}
\State $change=\max\limits_{i=1,\cdots,n}(|\lambda_i^{(k+1)}-\lambda_i^{(k)}|)$.
\State $k=k+1$.
\EndFor
 \EndWhile
\end{algorithmic}
\end{algorithm}

There are three user-specified parameters $\delta$, $Tol$, and $MaxIter_2$ in Algorithm \ref{alg:weight}.
$Tol$ is the tolerance of convergence, and we usually set it to be $Tol=1e-15$.
$MaxIter_2$ is the maximum number of iterations and we set it to be $MaxIter_2=200$ in all numerical examples.
The parameter $\delta\in (0,1]$ plays the same role as in the classical multiplicative algorithm \citep{78silvey}, which is to control the speed of the convergence.
According to the numerical study by \cite{74fellman} and \cite{83tor}, $\delta$ is often chosen as 1 for D-optimality, and 0.5 for A- or EI-optimality.

Derivation of \eqref{for:multialg} is stated as follows.
Update the weight of $\vx_i$ in iteration $k$ with
\begin{equation}\label{for:multialg1}
\tilde{\lambda}_i^{(k+1)} = \lambda_i^{(k)} \left(\frac{d_p(\vx_i,\xi^{\vlambda^{(k)}})}{\sum\limits_{t=1}^n\lambda_t^{(k)}d_p(\vx_t,\xi^{\vlambda^{(k)}})}\right)^{\delta},
\end{equation}
then normalize the weights to ensure the sum 1 condition as
\begin{equation}\label{for:multialg2}
\lambda_i^{(k+1)} = \frac{\tilde{\lambda}_i^{(k+1)}}{\sum\limits_{s=1}^n \tilde{\lambda}_s^{(k+1)}}.
\end{equation}
Plugging \eqref{for:multialg1} into \eqref{for:multialg2}, we have $\lambda_i^{(k+1)} =  \lambda_i^{(k)} \frac{\left(d_p(\vx_i,\xi^{\vlambda^{(k)}})\right)^{\delta}}{\sum\limits_{s=1}^n\lambda_s^{(k)}\left(d_p(\vx_s,\xi^{\vlambda^{(k)}})\right)^{\delta}}.$

%-------------------------------
\noindent\textbf{Additional Tables and Figures for Section \ref{sec: potato}}.
\begin{table}[htbp]
  \centering
  \caption{Model Space $\mathcal{M}$ of Potato Packing Example}
    \begin{tabular}{|lrrr|}
    \hline
    Term  & First-Order $M_1$ & With interaction $M_2$ & Second-order $M_3$  \\
    \hline
    Intercept & -0.28 & -1.44 & -2.93 \\
     $x_1$     & 0     & 0     & 0 \\
       $x_2$   & -0.76 & -1.95 & -0.52 \\
     $x_3$     & -1.15 & -2.36 & -0.79 \\
     $x_1x_2$     &       & 0     & 0 \\
      $x_1x_3$     &       & 0     & 0 \\
      $x_2x_3$     &       & -2.34 & -0.66 \\
     $x_1^2$     &       &       & 0.94 \\
     $x_2^2$     &       &       & 0.79 \\
     $x_3^2$     &       &       & 1.82 \\
    \hline
    \end{tabular}%
  \label{tab:PotatoPackModel}%
\end{table}%

\begin{figure}[ht]
\centering
\includegraphics[width=12cm]{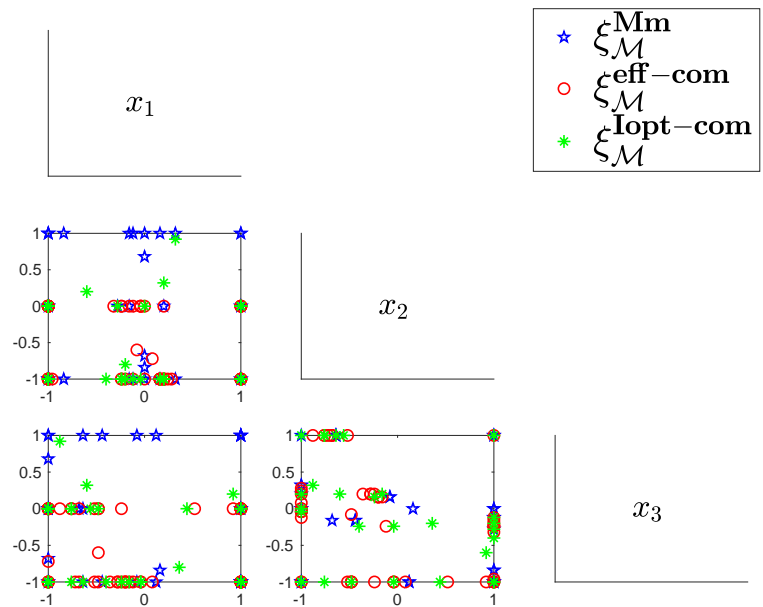}
\caption{Design Points of Mm-$\Phi_p$ Design and Compromise Designs}
\label{fig:PotatoPackPoints}
\end{figure}

%\bibliography{GLM.bib}

%%%%%%%%%%%%%%%%%%%%%%%%%%%%%%%%%%%%%%%%%%%%%%%%%%%%%%%%%%%%%%%%%%%%%%%%%%%%%%%%%%%%%%%%%%%%%%%%%%%%%%%%%%%%%%%%%%%%%%%%%%%%
%%%%%%%%%%%%%%%%%%%%%%%%%%%%%%%%%%%%%%%%%%%%%%%%%%%%%%%%%%%%%%%%%%%%%%%%%%%%%%%%%%%%%%%%%%%%%%%%%%%%%%%%%%%%%%%%%%%%%%%%%%%%

\end{document}